\documentclass[preprint,12pt]{elsarticle}

\usepackage[T1]{fontenc}
\usepackage{amsmath,amsfonts,amssymb,amsthm}
\usepackage{microtype}

\biboptions{numbers,sort&compress}
\numberwithin{equation}{section}

\newtheorem{theorem}{Theorem}[section]
\newtheorem{proposition}[theorem]{Proposition}
\newtheorem{lemma}[theorem]{Lemma}
\newtheorem{corollary}[theorem]{Corollary}

\theoremstyle{definition}
\newtheorem{definition}[theorem]{Definition}

\theoremstyle{remark}
\newtheorem*{remark}{Remark}

\DeclareMathOperator{\Id}{Id}
\DeclareMathOperator{\Op}{Op}
\DeclareMathOperator{\diver}{div}
\newcommand{\R}{\mathbb R}
\newcommand{\N}{\mathbb N}

\newcommand{\cX}{\mathcal X}
\newcommand{\cP}{\mathcal P}

\newcommand{\cE}{\mathcal E}
\newcommand{\bfx}{\mathbf x}
\newcommand{\bfp}{\mathbf p}
\newcommand{\bF}{\mathbf F}
\newcommand{\bH}{\mathbf H}
\newcommand{\bdx}{\boldsymbol\partial_{\mathbf x}}
\newcommand{\bdp}{\boldsymbol\partial_{\mathbf p}}
\newcommand{\qdx}{\partial_{\mathbf x}^{\,q}}
\newcommand{\qdp}{\partial_{\mathbf p}^{\,q}}
\newcommand{\oneX}{\mathbf 1_{\mathcal X}}
\newcommand{\Ux}{\mathcal U_x}
\newcommand{\Up}{\mathcal U_p}
\newcommand{\Ax}{\mathcal A_x}

\newcommand{\Mx}{\mathcal M_x}
\newcommand{\Jx}{\mathcal J_x}
\newcommand{\Jp}{\mathcal J_p}
\newcommand{\Dp}{\mathcal D_p}
\newcommand{\starh}{\star_h}
\newcommand{\Xstar}{\mathfrak X_{h,\star}^{H}}
\newcommand{\Xjack}{\mathfrak X_{h,\mathrm{cov}}^{H}}
\newcommand{\Bq}{\mathfrak B_h^{H}}
\newcommand{\Vfield}{V_h[H]}
\newcommand{\Vzero}{V_0[H]}

\journal{Journal of Geometry and Physics}

\begin{document}
\begin{frontmatter}

\title{Representation--Symbol Correspondence for $q$-Hamiltonian Mechanics on the Quantum Plane}

\author[swjtu]{Xiaomei Yang}
\ead{yangxiaomath@swjtu.edu.cn}
\author[uestc]{Zhiliang Deng\corref{cor1}}
\ead{dengzhl@uestc.edu.cn}
\cortext[cor1]{Corresponding author.}

\address[swjtu]{School of Mathematics, Southwest Jiaotong University, Chengdu, China}
\address[uestc]{School of Mathematical Science, University of Electronic Science and Technology of China, Chengdu, China}

\begin{abstract}
We develop a representation--symbol correspondence for $q$-Hamiltonian mechanics on the quantum plane. The coordinate algebra and its covariant differential calculus are realized simultaneously on a smooth commutative function space by multiplication, dilation, and Jackson operators. Normal ordering identifies the coordinate algebra with a polynomial symbol space and transports operator composition to an explicit associative star product. Under this correspondence, the induced covariant $q$-derivatives intertwine exactly with the Jackson operators, and the ordered Hamiltonian action descends to an exact star-Jackson action. For the coordinate observables, the relevant Jackson derivatives are constants, so the star factors reduce to the unit and the Jackson coordinate equations are exact symbol images of the formal quantum-plane equations. For nonlinear observables, pointwise multiplication produces a genuine commutativization error. We quantify this distinction through a first-order expansion of the operator quantization map, study the derivation, divergence, and energy defects of the induced pointwise Jackson dynamics, and prove convergence of the represented Hamiltonian, the symbol actions, the Jackson vector field, and its finite-time trajectories to their classical counterparts as $q\to1$. The resulting framework connects covariant quantum-plane differential calculus, deformation products, and Hamiltonian dynamics while keeping exact algebraic statements separate from commutative and classical approximations.
\end{abstract}

\begin{keyword}
quantum plane \sep covariant $q$-differential calculus \sep dilation representation \sep normal symbols \sep star product \sep Jackson dynamics \sep Hamiltonian dynamics
\MSC[2020] 81R50 \sep 37J06 \sep 39A13 \sep 81S10
\end{keyword}

\end{frontmatter}

\section{Introduction}
\label{sec:introduction}

Covariant differential calculi provide an algebraic framework for formulating
dynamics on quantum spaces.  Passing from such formal dynamics to concrete
evolution equations, however, requires one to represent the noncommuting
coordinates and their differential calculus by operators acting on ordinary
function spaces.  The quantum plane provides a prototypical setting for this
problem.  Its coordinate generators satisfy
$$
\mathbf p\mathbf x=q\mathbf x\mathbf p,
$$
and the Wess--Zumino calculus equips this algebra with a compatible covariant
$q$-differential structure \cite{WessZumino1991}.  Building on this setting,
$q$-deformed Lagrangian and Hamiltonian formalisms were developed in
\cite{CabanEtAl1994,LukinSternYakushin1993,Lavagno2006}; see also
\cite{KlimykSchmudgen1997,Majid1995,Manin2018} for broader background on
quantum groups and noncommutative geometry.  Since the underlying deformation
acts through multiplicative shifts, Jackson differences
\cite{KacCheung2002} provide a natural calculus for concrete realizations of
the corresponding differential operators.

Two complementary lines of work are particularly relevant to this passage
from algebraic relations to concrete dynamics.  On the operator side,
differential operators on noncommutative rings and on the quantum plane were
studied in \cite{LuntsRosenberg1997,IyerMcCune2003}, while explicit connections
between covariant quantum-plane differential calculi and $q$-difference or
scaling operators were developed in
\cite{Ubriaco1992,BauerWachter2003}.  Operator representations of
quantum-plane relations were investigated in \cite{Schmudgen2002}, resolvent
formulations for the real quantum plane in \cite{OstrovskyiSchmudgen2014},
and representation-theoretic classical limits in \cite{IpClassical2013}.
On the symbol side, universal deformation formulas and twist constructions
provide general mechanisms for deforming multiplication
\cite{GiaquintoZhang1998}, and covariant star-product realizations of quantum
spaces, including the quantum plane, were obtained in \cite{Blohmann2003}.
Related deformation-theoretic and Hochschild-cohomological properties of the
quantum plane and the $q$-Weyl algebra were studied in
\cite{GerstenhaberGiaquinto2014}, while shifted Leibniz rules arising in
$q$-difference calculi fit into the general theory of $\sigma$-derivations
\cite{HartwigLarssonSilvestrov2006}.  Thus one has, on the one hand, concrete
operator realizations of quantum-plane relations and, on the other, algebraic
mechanisms for transporting noncommutative multiplication to a commutative
symbol space.

The interaction between these two levels becomes especially relevant when a
Hamiltonian action is introduced.  Related realization questions occur in
other parts of $q$-deformed mathematical physics; for example, Melong and
Wulkenhaar \cite{MelongWulkenhaar2026} study $q$-deformed Airy structures
through $q$-difference operators, together with $q$-quantized spectral curves
and shifted $q$-difference equations.  A more direct motivation for the
present work comes from the $q$-analogue of Hamiltonian Monte Carlo in
\cite{YangDeng2025}, where formal quantum-plane Hamiltonian equations are
converted into a pointwise system involving dilations and Jackson
differences.  The sampling algorithm itself is not studied here.  Rather, it
raises a structural question common to the preceding developments: can the
coordinate algebra, the covariant differential calculus, the operator
realization, and the induced symbol product be incorporated into a single
framework in which the algebraic correspondence remains exact before any
pointwise commutative approximation is introduced?

We construct such a framework through a representation--symbol
correspondence.  On the positive quadrant, coordinatewise dilations form
commuting one-parameter groups generated by Euler operators.  Combining
multiplication with dilation gives a faithful representation of the
quantum-plane coordinate algebra, while the same dilation groups generate
Jackson operators realizing the covariant derivatives.  Normal ordering then
identifies the coordinate algebra, as a vector space, with
$$
\mathcal P=\mathbb C[x,p].
$$
Transporting operator composition through this identification yields the
associative normal star product
$$
F\star_h G
=
\sum_{k=0}^{\infty}
\frac{h^k}{k!}
\left(\mathcal A_p^kF\right)
\left(\mathcal A_x^kG\right),
$$
which is exact for polynomial symbols and satisfies
$$
p\star_h x=qxp.
$$
This product is not introduced as an independent deformation quantization;
rather, it records on the commutative symbol space the multiplication already
present in the represented quantum-plane algebra.

The main algebraic consequence is the compatibility of this symbol calculus
with the ordered Hamiltonian action.  The induced $q$-derivatives intertwine
exactly with the corresponding Jackson operators, while ordered products
descend to the normal star product.  Hence the formal Hamiltonian action has
an exact star--Jackson symbol image.  For the coordinate observables $x$ and
$p$, the relevant Jackson derivatives are constants, so the star factors
reduce to the unit and the Jackson coordinate equations are obtained exactly,
without a small-$h$ approximation.  The distinction between exact and
commutative levels becomes visible only for nonlinear observables, where
replacing the star product by pointwise multiplication changes the action.
We compare the exact star action with covariant and asymmetric pointwise
Jackson actions, derive the associated shifted product rules and derivation
defects, and study the ordinary divergence and energy variation of the
resulting pointwise vector field.  These are properties of the commutative
realization and are kept distinct from covariance properties of the
noncommutative differential calculus.  Finally, as
$$
q=e^h\longrightarrow1,
$$
we establish local convergence of the represented Hamiltonian, the symbol
actions, and the Jackson vector field to their classical counterparts, as
well as uniform finite-time convergence of the corresponding trajectories
under the stated compactness assumptions.

The exact algebraic statements are formulated for polynomial Hamiltonians and
observables, for which normal ordering and operator composition are finite and
unambiguous, while the asymptotic results are established locally for
sufficiently smooth functions.  No braided symplectic interpretation of the
pointwise vector field or classification of quantum-plane representations is
assumed.  The paper is organized as follows.
Section~\ref{sec:dilation} constructs the dilation representation of the
coordinate algebra and its covariant differential calculus.
Section~\ref{sec:symbols} develops the normal-symbol calculus and establishes
the exact representation--symbol correspondence for the Hamiltonian action.
Section~\ref{sec:euclidean} compares the exact star-symbol dynamics with
pointwise Jackson realizations and studies their derivation and geometric
properties.  Section~\ref{sec:limit} establishes the classical and finite-time
limits.

\section{Dilation representation of the quantum-plane differential calculus}
\label{sec:dilation}

The first step is to place the coordinate algebra and its covariant
differential calculus in a single operator framework.  Coordinatewise
dilations provide the underlying group action, multiplication--dilation
operators realize the quantum-plane relation faithfully, and the same
dilation groups generate the Jackson operators representing the covariant
$q$-derivatives.  The point of the construction is therefore not to postulate
coordinate and derivative substitutions separately, but to obtain them as
the images of generators of one represented differential algebra.

\subsection{Dilation groups and logarithmic coordinates}
\label{subsec:dilation-groups}

Let $\mathbb R_+^2=(0,\infty)\times(0,\infty)$,
and set $\mathcal X=C^\infty(\mathbb R_+^2;\mathbb C)$. 
We denote by $\oneX\in\mathcal X$
the constant function $\oneX(x, p)=1$.
This symbol is reserved for the constant function in $\mathcal X$ and is
therefore distinct from the algebraic units of $\mathfrak A_q$ and
$\mathfrak D_q$ introduced below.

We equip $\mathcal X$ with its standard Fr\'echet topology. More precisely,
for every compact set $K\Subset\mathbb R_+^2$ and every nonnegative integer
$r$, define the seminorm
\begin{align*}
\|F\|_{K, r}&:=\max_{|\gamma|\le r}\sup_{(x, p)\in K}
\left|\partial^\gamma F(x, p)\right|.
\end{align*}
The topology generated by these seminorms is the topology of uniform
convergence on compact subsets of all partial derivatives.

For $t\in\mathbb R$, define the coordinatewise dilation operators
$\mathcal U_x(t)$ and $\mathcal U_p(t)$ by
\begin{align*}
\bigl(\mathcal U_x(t)F \bigr)(x, p)=F(e^t x, p),\qquad
\bigl(\mathcal U_p(t)F\bigr)(x, p)=F(x, e^t p).
\end{align*}

\begin{proposition}
\label{prop:dilation-groups}
The families $\left\{\mathcal U_x(t)\right\}_{t\in\mathbb R}$ and $\left\{\mathcal U_p(t)\right\}_{t\in\mathbb R}$
are strongly continuous one-parameter groups of continuous linear operators
on $\mathcal X$. They commute:
\begin{align*}
\mathcal U_x(t)\mathcal U_p(s)
&=\mathcal U_p(s)\mathcal U_x(t)
\end{align*}
for all $s,t\in\mathbb R$. Their infinitesimal generators are defined on all
of $\mathcal X$ and are given by the Euler operators
\begin{align*}
\mathcal A_x=x\partial_x,\qquad
\mathcal A_p=p\partial_p.
\end{align*}
\end{proposition}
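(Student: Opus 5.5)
The plan is to pass to logarithmic coordinates, in which the dilations become translations, and then to reduce strong continuity and the generator computation to uniform estimates for smooth functions on compact sets. Write $\Phi(u,v)=(e^u,e^v)$, a diffeomorphism $\mathbb R^2\to\mathbb R_+^2$. Pulling back, $\mathcal U_x(t)$ becomes the translation $\tilde F(u,v)\mapsto\tilde F(u+t,v)$, and similarly for $\mathcal U_p(s)$.

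\emph{Algebraic properties.} These are immediate from the definition:
\begin{align*}
\mathcal U_x(0)&=\Id,\\
\mathcal U_x(t)\mathcal U_x(s)F(x,p)&=F(e^{t}e^{s}x,p),\\
\mathcal U_x(t)\mathcal U_p(s)F(x,p)&=F(e^tx,e^sp)=\mathcal U_p(s)\mathcal U_x(t)F(x,p).
\end{align*}
Each operator maps $\mathcal X$ into $\mathcal X$, since it is composition with a diffeomorphism of $\mathbb R_+^2$.

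\emph{Continuity.} For continuity I will use the chain rule $\partial^\gamma\bigl(\mathcal U_x(t)F\bigr)(x,p)=e^{t\gamma_1}(\partial^\gamma F)(e^tx,p)$. This gives
$$\|\mathcal U_x(t)F\|_{K,r}\le e^{r|t|}\|F\|_{K_t,r},$$
where $K_t=\{(e^tx,p):(x,p)\in K\}$ is again compact. Hence each $\mathcal U_x(t)$ is continuous.

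\emph{Strong continuity.} By the group law it suffices to prove it at $t=0$, and I will in fact prove it uniformly for $|t|\le1$. Fix $K$ and $r$, and let $\tilde K=\bigcup_{|t|\le1}K_t$, which is compact. We have
$$\partial^\gamma(\mathcal U_x(t)F-F)=e^{t\gamma_1}\bigl[(\partial^\gamma F)(e^tx,p)-(\partial^\gamma F)(x,p)\bigr]+(e^{t\gamma_1}-1)\,\partial^\gamma F .$$
The first term tends to $0$ uniformly on $K$ because $\partial^\gamma F$ is uniformly continuous on $\tilde K$; the second is bounded by $|e^{t\gamma_1}-1|\,\|F\|_{K,r}$.

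\emph{The generator.} This is the main step. Put $g(t)=F(e^tx,p)$, so that $g'(t)=e^tx(\partial_xF)(e^tx,p)=(\mathcal A_x F)(e^tx,p)$. Then
$$\frac{\mathcal U_x(t)F-F}{t}-\mathcal A_xF=\frac1t\int_0^t\bigl[\mathcal U_x(s)\mathcal A_xF-\mathcal A_xF\bigr]\,ds .$$
Since $\mathcal A_xF\in\mathcal X$ (the function $x$ is smooth), every seminorm of the integrand tends to $0$ as $s\to0$ by the strong continuity just proved. Differentiating under the integral commutes with $\partial^\gamma$ because everything is smooth. So the seminorm of the left-hand side is at most $\sup_{|s|\le|t|}\|\mathcal U_x(s)\mathcal A_xF-\mathcal A_xF\|_{K,r}\to0$.

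Consequently the difference quotient converges in $\mathcal X$ for every $F$. The generator is therefore defined on all of $\mathcal X$ and equals $\mathcal A_x=x\partial_x$. The $p$-case is identical with the roles of the coordinates exchanged.

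The main obstacle is only bookkeeping. One must check that convergence holds in every seminorm, with derivatives included, which is handled by the commutation $\partial^\gamma\mathcal U_x(t)=e^{t\gamma_1}\mathcal U_x(t)\partial^\gamma$ and compactness of $\tilde K$.
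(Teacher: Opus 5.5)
Your proof is correct and follows essentially the same route as the paper: the group laws are verified directly, the chain rule bounds each seminorm on a compact set that contains all dilated images, and the difference quotient converges to $x\partial_x F$ in every seminorm. You supply the details the paper only sketches, namely the identity $\partial^\gamma\mathcal U_x(t)=e^{t\gamma_1}\mathcal U_x(t)\partial^\gamma$, the uniform-continuity argument, and the integral representation of the difference quotient; the logarithmic-coordinate framing you announce at the start is never used and can be dropped, since the paper reserves that conjugacy for the next proposition.
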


\begin{proof}
The group laws, inverses, and commutation follow directly from the definitions of
$\mathcal U_x$ and $\mathcal U_p$.  For $t$ in a compact interval,
the image of a fixed compact set $K\Subset\mathbb R_+^2$ under either
dilation is contained in a common compact set.  The chain rule therefore
gives continuity in every defining seminorm, and the smooth
dependence on $t$ gives strong continuity.  Finally,
\begin{align*}
\lim_{t\to0}\frac{\mathcal U_x(t)F-F}{t}
&=x\partial_xF,
&\lim_{t\to0}\frac{\mathcal U_p(t)F-F}{t}
&=p\partial_pF
\end{align*}
in $\mathcal X$, which identifies the generators.
\end{proof}

In terms of the infinitesimal generators, we use the notation
\begin{align*}
\mathcal U_x(t)=e^{t\mathcal A_x},\qquad
\mathcal U_p(t)=e^{t\mathcal A_p}.
\end{align*}
Here the exponential notation refers to the strongly continuous
one-parameter groups generated by $\mathcal A_x$ and $\mathcal A_p$.

We next describe the dilation groups in logarithmic coordinates. Let
$\mathcal Y=C^\infty(\mathbb R^2;\mathbb C)$, equipped with its standard
Fr\'echet topology, and define
\begin{align*}
\bigl(\mathcal CF\bigr)(u,v)=F(e^u,e^v)=:G(u,v),\qquad
\bigl(\mathcal C^{-1}G\bigr)(x,p)=G(\log x,\log p).
\end{align*}

\begin{proposition}
\label{prop:logarithmic-conjugacy}
The map $\mathcal C:\mathcal X\longrightarrow\mathcal Y$
is a linear topological isomorphism. Under this isomorphism, the dilation
groups are conjugate to the translation groups:
\begin{align*}
\bigl(
\mathcal C\mathcal U_x(t)\mathcal C^{-1}G
\bigr)(u, v)=G(u+t, v),\qquad
\bigl(\mathcal C\mathcal U_p(t)\mathcal C^{-1}G\bigr)(u, v)=G(u, v+t).
\end{align*}
Consequently,
\begin{align*}
\mathcal C\mathcal A_x\mathcal C^{-1}=\partial_u,\qquad
\mathcal C\mathcal A_p\mathcal C^{-1}=\partial_v.
\end{align*}
\end{proposition}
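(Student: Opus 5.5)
The plan is to realize $\mathcal C$ as composition with a diffeomorphism and let every claim follow from that. Set $\Phi:\R^2\to\R_+^2$, $\Phi(u,v)=(e^u,e^v)$. Then $\Phi$ is a smooth diffeomorphism with smooth inverse $\Phi^{-1}(x,p)=(\log x,\log p)$, and we have $\mathcal CF=F\circ\Phi$ and $\mathcal C^{-1}G=G\circ\Phi^{-1}$.

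First I check the algebraic facts. Both maps preserve smoothness by the chain rule and are clearly linear. They are mutually inverse because $\Phi\circ\Phi^{-1}=\Id$ and $\Phi^{-1}\circ\Phi=\Id$.

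The main point is continuity in the Fr\'echet topologies; here it is only routine bookkeeping.
\begin{itemize}
\item Fix a compact $L\Subset\R^2$ and an integer $r\ge0$. Then $K=\Phi(L)$ is compact in $\R_+^2$.
\item By the multivariate chain rule (Fa\`a di Bruno), each $\partial^\gamma(F\circ\Phi)$ with $|\gamma|\le r$ is a finite sum of terms $(\partial^\beta F)\circ\Phi$, $|\beta|\le|\gamma|$, times polynomials in derivatives of $\Phi$.
\item Those polynomials are bounded on $L$. Hence $\|\mathcal CF\|_{L,r}\le C_{L,r}\|F\|_{K,r}$.
\item Symmetrically, for compact $K\Subset\R_+^2$ we get $\|\mathcal C^{-1}G\|_{K,r}\le C'_{K,r}\|G\|_{\Phi^{-1}(K),r}$, since $\log$ has bounded derivatives on compact subsets of $(0,\infty)$.
\end{itemize}
So both maps are continuous, and $\mathcal C$ is a topological isomorphism.

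For the conjugacy, I compute directly:
\begin{align*}
\bigl(\mathcal C\,\Ux(t)\,\mathcal C^{-1}G\bigr)(u,v)
=\bigl(\Ux(t)\,\mathcal C^{-1}G\bigr)(e^u,e^v)
=\bigl(\mathcal C^{-1}G\bigr)(e^{t}e^u,e^v)
=G(u+t,v).
\end{align*}
The same computation for $\Up(t)$ gives $G(u,v+t)$.

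For the generators there are two routes.
\begin{itemize}
\item One can differentiate at $t=0$. Conjugation by a topological isomorphism preserves limits in $\mathcal Y$, and Proposition~\ref{prop:dilation-groups} gives the generator $\Ax$ on all of $\mathcal X$. So the generator of the translation group is $\mathcal C\Ax\mathcal C^{-1}$, and it equals $\partial_u$ because the difference quotient $(G(u+t,v)-G(u,v))/t$ converges to $\partial_uG$.
\item Alternatively, the chain rule gives $\partial_u(F(e^u,e^v))=e^u(\partial_xF)(e^u,e^v)=\bigl(\mathcal C(x\partial_xF)\bigr)(u,v)$. This yields $\partial_u\mathcal C=\mathcal C\Ax$ directly, and similarly $\partial_v\mathcal C=\mathcal C\mathcal A_p$.
\end{itemize}
I would use the chain-rule identity as the proof, since it avoids discussing convergence of difference quotients in $\mathcal Y$.
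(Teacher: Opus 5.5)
Your proposal is correct and follows the paper's proof. $\mathcal C$ is pullback by the diffeomorphism $(u,v)\mapsto(e^u,e^v)$, the conjugacy is direct substitution, and you make explicit the seminorm estimates the paper leaves implicit; your chain-rule identity $\partial_u\mathcal C=\mathcal C\mathcal A_x$ is an equivalent, slightly more direct substitute for the paper's differentiation at $t=0$.
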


\begin{proof}
The exponential map $(u, v)\mapsto(e^u, e^v)$ is a smooth diffeomorphism from
$\mathbb R^2$ onto $\mathbb R_+^2$, so pullback gives the stated topological
isomorphism.  Direct substitution yields
\begin{align*}
\bigl(\mathcal C\mathcal U_x(t)\mathcal C^{-1}G\bigr)(u, v)
&=G(u+t, v),
&\bigl(\mathcal C\mathcal U_p(t)\mathcal C^{-1}G\bigr)(u, v)
&=G(u, v+t).
\end{align*}
Differentiation at $t=0$ gives the asserted generator identities.
\end{proof}

Thus multiplicative displacement in the original coordinates becomes
ordinary additive translation in logarithmic coordinates. In particular, the
Euler operators are the infinitesimal generators of scale changes in the
original variables and ordinary translation generators after the logarithmic
change of coordinates.

\subsection{Coordinate algebra and faithful dilation representation}
\label{subsec:coordinate-representation}

We now turn to the coordinate algebra of the quantum plane. Fix $q=e^h>0$. 
Let $\mathbb C\langle\mathbf x, \mathbf p\rangle$ denote the free unital associative algebra over $\mathbb C$ generated by two
noncommuting symbols $\mathbf x$ and $\mathbf p$. Define
\begin{align*}
\mathfrak A_q
&=\mathbb C\langle\mathbf x, \mathbf p\rangle
\Big/\left\langle\mathbf p\mathbf x-q\mathbf x\mathbf p\right\rangle,
\end{align*}
where $\left\langle\mathbf p\mathbf x-q\mathbf x\mathbf p\right\rangle$
denotes the two-sided ideal generated by
$\mathbf p\mathbf x-q\mathbf x\mathbf p$.
We continue to denote the equivalence classes of the two free generators by
$\mathbf x$ and $\mathbf p$. Thus, in $\mathfrak A_q$,
\begin{align*}
\mathbf p\mathbf x
&=
q\mathbf x\mathbf p.
\end{align*}
The boldface symbols $\mathbf x$ and $\mathbf p$ are abstract
noncommuting algebra generators. They must be distinguished from the ordinary
commuting coordinate functions $x$ and $p$ on $\mathbb R_+^2$.

Let $\mathcal L(\mathcal X)$ denote the unital algebra of continuous linear
maps from $\mathcal X$ into itself, with multiplication given by operator
composition. For every $a\in\mathcal X$, let
$$
\mathcal M_a:\mathcal X\longrightarrow\mathcal X
$$
denote multiplication by $a$:
\begin{align*}
\bigl(
\mathcal M_aF
\bigr)(x,p)
&=
a(x,p)F(x,p).
\end{align*}
The bilinear multiplication map on $\mathcal X$ is continuous in its standard
Fr\'echet topology. Hence $\mathcal M_a\in\mathcal L(\mathcal X)$ for every
$a\in\mathcal X$, and
\begin{align*}
\mathcal M_a\mathcal M_b=\mathcal M_{ab},\qquad
\mathcal M_{\oneX}=\Id_{\mathcal X}.
\end{align*}
In particular, the coordinate multiplication operators are
\begin{align*}
\bigl(
\mathcal M_xF
\bigr)(x, p)=xF(x, p),\qquad
\bigl(\mathcal M_pF\bigr)(x, p)=pF(x, p).
\end{align*}
Thus
\begin{align*}
\mathcal M_x^m\mathcal M_p^n
&=
\mathcal M_{x^mp^n}
\end{align*}
for all $m,n\in\mathbb N_0$. In particular,
$\mathcal M_x\mathcal M_p=\mathcal M_p\mathcal M_x$.

Define
\begin{align*}
\Pi_{\mathbf x}:=\mathcal M_x,\qquad
\Pi_{\mathbf p}:=\mathcal M_p\mathcal U_x(h).
\end{align*}
Their pointwise actions are
\begin{align*}
\bigl(
\Pi_{\mathbf x}F
\bigr)(x, p)=xF(x, p),\qquad
\bigl(\Pi_{\mathbf p}F
\bigr)(x, p)=pF(qx, p).
\end{align*}

\begin{proposition}
\label{prop:coordinate-representation}
The operators $\Pi_{\mathbf x}$ and $\Pi_{\mathbf p}$ satisfy $\Pi_{\mathbf p}\Pi_{\mathbf x}=q\Pi_{\mathbf x}\Pi_{\mathbf p}$.
Consequently, the assignment
$$
\mathbf x
\longmapsto
\Pi_{\mathbf x},
\qquad
\mathbf p
\longmapsto
\Pi_{\mathbf p}
$$
extends uniquely to a unital algebra representation
\begin{align*}
\Pi:
\mathfrak A_q
\longrightarrow
\mathcal L(\mathcal X).
\end{align*}
\end{proposition}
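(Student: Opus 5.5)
The plan is to verify the relation by evaluating both sides on an arbitrary $F\in\mathcal X$, and then to appeal to the universal properties of the free algebra and of the quotient. Before that, I will confirm that $\Pi_{\mathbf x}$ and $\Pi_{\mathbf p}$ lie in $\mathcal L(\mathcal X)$. $\Pi_{\mathbf x}=\mathcal M_x$ is continuous because multiplication by a fixed element of $\mathcal X$ is continuous, as already noted. $\Pi_{\mathbf p}=\mathcal M_p\mathcal U_x(h)$ is a composition of $\mathcal M_p$ with $\mathcal U_x(h)$, which is continuous by Proposition~\ref{prop:dilation-groups}. Hence $\Pi_{\mathbf p}\in\mathcal L(\mathcal X)$.

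For the relation, compute pointwise. On one side,
\begin{align*}
\bigl(\Pi_{\mathbf p}\Pi_{\mathbf x}F\bigr)(x,p)
=p\,\bigl(\Pi_{\mathbf x}F\bigr)(qx,p)
=p\,qx\,F(qx,p).
\end{align*}
On the other side,
\begin{align*}
\bigl(\Pi_{\mathbf x}\Pi_{\mathbf p}F\bigr)(x,p)=x\,pF(qx,p).
\end{align*}
Comparing the two gives $\Pi_{\mathbf p}\Pi_{\mathbf x}=q\Pi_{\mathbf x}\Pi_{\mathbf p}$. The same identity can be stated at the operator level as the commutation rule $\mathcal U_x(h)\mathcal M_x=q\mathcal M_x\mathcal U_x(h)$, which follows from $\mathcal U_x(h)(xF)=e^hx\,\mathcal U_x(h)F$, together with the fact that $\mathcal M_p$ commutes with $\mathcal M_x$.

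For the extension, the free unital algebra $\mathbb C\langle\mathbf x,\mathbf p\rangle$ has the following universal property. Since $\mathcal L(\mathcal X)$ is a unital associative algebra under composition with unit $\Id_{\mathcal X}$, there is a unique unital algebra homomorphism $\widetilde\Pi:\mathbb C\langle\mathbf x,\mathbf p\rangle\to\mathcal L(\mathcal X)$ sending the generators to $\Pi_{\mathbf x}$ and $\Pi_{\mathbf p}$. By the computed relation, $\widetilde\Pi(\mathbf p\mathbf x-q\mathbf x\mathbf p)=0$. The kernel of $\widetilde\Pi$ is a two-sided ideal, so it contains the two-sided ideal generated by $\mathbf p\mathbf x-q\mathbf x\mathbf p$. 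Therefore $\widetilde\Pi$ factors through the quotient $\mathfrak A_q$, giving $\Pi$.

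For uniqueness, any unital homomorphism on $\mathfrak A_q$ is determined by its values on the classes of the generators, because these generate $\mathfrak A_q$ as a unital algebra. No step poses a real obstacle. The only care needed is to check continuity, so that the image lies in $\mathcal L(\mathcal X)$, and to keep track of the ordering convention $\Pi_{\mathbf p}=\mathcal M_p\mathcal U_x(h)$, so that the factor $q$ lands on the correct side.
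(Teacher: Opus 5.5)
Your proof is correct and follows the paper's argument: the relation comes from $\mathcal U_x(h)\mathcal M_x=q\mathcal M_x\mathcal U_x(h)$ together with $\mathcal M_p$ commuting with both factors, and the representation then comes from the universal property of the free algebra and of the quotient. Your pointwise verification, the check that $\Pi_{\mathbf p}$ is continuous, and the uniqueness remark fill in details that the paper leaves implicit.
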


\begin{proof}
Directly from the definitions,
\begin{align*}
\mathcal U_x(h)\mathcal M_x
&=q\mathcal M_x\mathcal U_x(h).
\end{align*}
Since $\mathcal M_p$ commutes with $\mathcal M_x$ and $\mathcal U_x(h)$,
\begin{align*}
\Pi_{\mathbf p}\Pi_{\mathbf x}
&=\mathcal M_p\mathcal U_x(h)\mathcal M_x
=q\mathcal M_x\mathcal M_p\mathcal U_x(h)
=q\Pi_{\mathbf x}\Pi_{\mathbf p}.
\end{align*}
Hence the defining ideal of $\mathfrak A_q$ lies in the kernel of the
homomorphism from the free algebra determined by the two operators, and the
quotient universal property gives the representation $\Pi$.
\end{proof}

The covariance relation above may equivalently be written as
\begin{align*}
\mathcal U_x(h)
\mathcal M_x
\mathcal U_x(-h)
&=
q\mathcal M_x.
\end{align*}
Thus $\mathcal M_x$ has homogeneity one under the $x$-dilation. The
noncommutativity of $\Pi_{\mathbf x}$ and $\Pi_{\mathbf p}$ is generated by
the interaction between multiplication and dilation; it does not arise from
any noncommutativity of the ordinary variables $x$ and $p$.

We next record the normal-ordering property of $\mathfrak A_q$ and the
faithfulness of the preceding representation.

\begin{theorem}[Normal ordering and faithfulness]
\label{thm:normal-ordering-faithfulness}
The ordered monomials
$\left\{\mathbf x^m\mathbf p^n: m, n\in\mathbb N_0\right\}$
form a vector-space basis of $\mathfrak A_q$, with
$\mathbf x^0\mathbf p^0=1_{\mathfrak A_q}$. Consequently, every
$\mathbf F\in\mathfrak A_q$ has a unique finite expansion
\begin{align*}
\mathbf F
&=
\sum_{m,n\ge0}
a_{mn}\mathbf x^m\mathbf p^n.
\end{align*}
Moreover, the representation
$\Pi: \mathfrak A_q\longrightarrow\mathcal L(\mathcal X)$ is injective.
\end{theorem}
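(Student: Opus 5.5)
The proof splits into spanning, linear independence, and injectivity. The latter two are obtained together by evaluating $\Pi$ on ordered monomials and applying them to the constant function $\oneX$.

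\emph{Spanning.} Every element of $\mathfrak A_q$ is a finite linear combination of words in $\mathbf x,\mathbf p$. I would show by induction on the number of inversions (pairs in which a $\mathbf p$ precedes an $\mathbf x$) that each word equals $q^{k}\mathbf x^m\mathbf p^n$. Here $m$ and $n$ are the numbers of occurrences of $\mathbf x$ and $\mathbf p$, and $k$ is the inversion count. Each application of $\mathbf p\mathbf x=q\mathbf x\mathbf p$ to an adjacent pair $\mathbf p\mathbf x$ lowers the inversion count by one and produces one factor of $q$. Hence the ordered monomials span $\mathfrak A_q$.

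\emph{Independence and faithfulness.} Suppose $\mathbf F=\sum a_{mn}\mathbf x^m\mathbf p^n$ is a finite sum with $\Pi(\mathbf F)=0$, or with $\mathbf F=0$ in $\mathfrak A_q$ (which also gives $\Pi(\mathbf F)=0$). I would compute $\Pi(\mathbf x^m\mathbf p^n)=\mathcal M_x^m(\mathcal M_p\mathcal U_x(h))^n$. Since $\mathcal M_p$ commutes with $\mathcal U_x(h)$, this equals $\mathcal M_{x^mp^n}\,\mathcal U_x(nh)$. Its action is
\begin{align*}
\bigl(\Pi(\mathbf x^m\mathbf p^n)G\bigr)(x,p)=x^mp^nG(q^nx,p).
\end{align*}
Applying this to $G=\oneX$ gives $\Pi(\mathbf F)\oneX=\sum a_{mn}x^mp^n$. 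This is a polynomial function on the open set $\mathbb R_+^2$, so its vanishing there forces every coefficient $a_{mn}$ to vanish. Applied to a relation $\sum a_{mn}\mathbf x^m\mathbf p^n=0$ in $\mathfrak A_q$, the argument gives linear independence of the ordered monomials, so together with spanning they form a basis and expansions are unique. Applied to an arbitrary $\mathbf F\in\ker\Pi$, written in this basis, it gives $\mathbf F=0$, so $\Pi$ is injective.

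Each step is elementary, and the only point needing care is the spanning induction. There one must make sure that the reordering is well defined modulo the ideal; this holds because each rewriting step replaces a word by another word congruent to it. Linear independence then follows without any diamond-lemma argument, because the representation separates the ordered monomials. Evaluation at $\oneX$ turns the question into independence of ordinary monomials, which avoids any dependence on the dilation factor.
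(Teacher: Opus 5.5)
Your proposal is correct and follows the paper's proof essentially step for step: spanning by rewriting words via the relation so that the inversion count decreases, then linear independence and injectivity together by computing $\Pi(\mathbf x^m\mathbf p^n)=\mathcal M_{x^mp^n}\mathcal U_x(nh)$, applying it to $\oneX$, and using the independence of the ordinary monomials $x^mp^n$ on the open set $\mathbb R_+^2$. Your explicit bookkeeping of the factor $q^k$ and your remark that no diamond-lemma argument is needed are small refinements of the same argument.
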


\begin{proof}
Repeated use of $\mathbf p\mathbf x=q\mathbf x\mathbf p$ decreases the number
of inversions in a word and therefore rewrites every word as a scalar multiple
of some $\mathbf x^m\mathbf p^n$.  Thus the ordered monomials span
$\mathfrak A_q$.  Moreover,
$\Pi\left(\mathbf x^m\mathbf p^n\right)=\mathcal M_{x^mp^n}\mathcal U_x(nh)$,
and hence
$\Pi\left(\mathbf x^m\mathbf p^n\right)\oneX=x^mp^n$.
A linear relation among the ordered monomials would therefore give the same
linear relation among the ordinary monomials $x^mp^n$ on the open set
$\mathbb R_+^2$, which is impossible unless all coefficients vanish.  This
proves the basis statement.  Finally, if $\mathbf F\in\ker\Pi$, expand
$\mathbf F$ in that basis and apply $\Pi(\mathbf F)$ to $\oneX$; linear
independence of the ordinary monomials again forces every coefficient to
vanish.  Thus $\Pi$ is injective.
\end{proof}

\begin{corollary}
\label{cor:concrete-coordinate-representation}
Let
$$
\mathbf F=\sum_{m, n\ge0}a_{mn}\mathbf x^m\mathbf p^n
$$
be the unique normal expansion of $\mathbf F\in\mathfrak A_q$. Then
\begin{align*}
\Pi(\mathbf F)
&=\sum_{m, n\ge0}a_{mn}\mathcal M_{x^m p^n}\mathcal U_x(n h).
\end{align*}
Equivalently, for every $G\in\mathcal X$,
\begin{align*}
\bigl(\Pi(\mathbf F)G\bigr)(x, p)
&=\sum_{m, n\ge0}a_{mn}x^mp^nG(q^n x, p).
\end{align*}
In particular,
\begin{align*}
\operatorname{Im}\Pi
&=
\operatorname{alg}
\left(
\Id_{\mathcal X},
\Pi_{\mathbf x},
\Pi_{\mathbf p}
\right),
\end{align*}
the unital operator subalgebra of $\mathcal L(\mathcal X)$ generated by
$\Pi_{\mathbf x}$ and $\Pi_{\mathbf p}$.
\end{corollary}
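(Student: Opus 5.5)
The plan is to compute $\Pi$ on the ordered monomials, extend by linearity using the unique normal expansion from Theorem~\ref{thm:normal-ordering-faithfulness}, and then identify the image with the generated subalgebra.

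\emph{Step 1: monomials.} Since $\Pi$ is a unital algebra homomorphism (Proposition~\ref{prop:coordinate-representation}), we have
\[
\Pi(\mathbf x^m\mathbf p^n)=\Pi_{\mathbf x}^m\Pi_{\mathbf p}^n=\mathcal M_x^m\bigl(\mathcal M_p\,\mathcal U_x(h)\bigr)^n .
\]
I will show by induction on $n$ that $\bigl(\mathcal M_p\,\mathcal U_x(h)\bigr)^n=\mathcal M_{p^n}\,\mathcal U_x(nh)$. The inductive step needs two facts. First, $\mathcal U_x(h)$ commutes with $\mathcal M_p$, because $p$ does not depend on $x$. Second, the group law $\mathcal U_x(h)\,\mathcal U_x(nh)=\mathcal U_x((n+1)h)$ holds by Proposition~\ref{prop:dilation-groups}. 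Combining this with $\mathcal M_x^m\mathcal M_{p^n}=\mathcal M_{x^mp^n}$ gives
\[
\Pi(\mathbf x^m\mathbf p^n)=\mathcal M_{x^mp^n}\,\mathcal U_x(nh),
\]
which is the formula already quoted in the proof of the theorem.

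\emph{Step 2: the operator formula.} The expansion of $\mathbf F$ is finite, so applying the linearity of $\Pi$ to Step~1 gives the stated expression for $\Pi(\mathbf F)$. Evaluating it on $G\in\mathcal X$ gives the pointwise formula, since $(\mathcal U_x(nh)G)(x,p)=G(e^{nh}x,p)=G(q^nx,p)$.

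\emph{Step 3: the image.} The image of a unital algebra homomorphism is a unital subalgebra. It contains $\Pi_{\mathbf x}$, $\Pi_{\mathbf p}$ and $\Id_{\mathcal X}$, so it contains $\operatorname{alg}(\Id_{\mathcal X},\Pi_{\mathbf x},\Pi_{\mathbf p})$. For the reverse inclusion, $\mathfrak A_q$ is spanned by words in $\mathbf x,\mathbf p$, and their images are products of $\Pi_{\mathbf x}$ and $\Pi_{\mathbf p}$. Hence every element of $\operatorname{Im}\Pi$ is a finite linear combination of such products and lies in the generated subalgebra.

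The only nonroutine point is the commutation in Step~1. It is what allows all the dilations to be collected to the right of the multiplication operators. I expect no real obstacle beyond stating it carefully.
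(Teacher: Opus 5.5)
Your proposal is correct and follows the paper's proof. Both arguments proceed the same way: multiplicativity gives the monomial formula, linearity over the finite normal expansion gives the operator formula, $(\mathcal U_x(nh)G)(x,p)=G(q^nx,p)$ gives the pointwise form, and the fact that $\mathbf x,\mathbf p$ generate $\mathfrak A_q$ identifies the image. Your induction using $\mathcal U_x(h)\mathcal M_p=\mathcal M_p\mathcal U_x(h)$ just supplies the verification of $\Pi(\mathbf x^m\mathbf p^n)=\mathcal M_{x^mp^n}\mathcal U_x(nh)$, which the paper states without detail.
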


\begin{proof}
The operator formula follows from linearity, multiplicativity, and the
ordered-monomial representation above. Its pointwise form is immediate from
$$
\bigl(
\mathcal U_x(nh)G
\bigr)(x,p)
=
G(q^nx,p).
$$
The description of the image follows because $\mathfrak A_q$ is generated as
a unital algebra by $\mathbf x$ and $\mathbf p$.
\end{proof}

The identity
$$
\Pi(\mathbf F)\oneX
=
\sum_{m,n}a_{mn}x^mp^n
$$
is an equality in $\mathcal X$. It will be used in
Section~\ref{sec:symbols} to recover the normal symbol of a represented
coordinate polynomial.

\begin{remark}
\label{rem:sign-sectors}
The choice of $\mathbb R_+^2$ is convenient because it permits global
logarithmic coordinates and makes the inverse multiplication operators used
in the Jackson calculus nonsingular. Operator constructions on the quantum
quarter plane and the real quantum plane provide a broader analytic context
for such positive and signed realizations; see \cite{Schmudgen2002}. The
algebraic representation itself is not restricted to the positive quadrant.

For
$\varepsilon_x,\varepsilon_p\in\{-1,1\}$, define the open sign sector
$$\Omega_{\varepsilon_x,\varepsilon_p}=
\left\{(x, p)\in\mathbb R^2: \varepsilon_xx>0,\, \varepsilon_p p>0\right\}.$$
Positive dilations preserve each such sector. Replacing $\mathcal X$ by
$C^\infty
\left(
\Omega_{\varepsilon_x, \varepsilon_p};\mathbb C
\right)$
and retaining the same multiplication and dilation formulas gives the same
operator identities and the same faithful representation of
$\mathfrak A_q$. Logarithmic coordinates on that sector are given by
$$
u=\log|x|,
\qquad
v=\log|p|.
$$
The coordinate axes are excluded because they are not preserved as open
multiplicative coordinate charts and because inverse multiplication operators
used later in the Jackson calculus become singular there. Values on the axes
may be introduced separately by continuous extension when such extensions
exist.
\end{remark}

\subsection{Covariant differential calculus and Jackson realization}
\label{subsec:differential}

The faithful coordinate representation does not yet determine how the
covariant derivative generators act.  We now extend the same operator
framework to the differential calculus.  The relevant Jackson operators are
generated by the dilation groups, the mixed coordinate--derivative relations
are verified at the operator level, and the induced left $q$-derivatives on
the coordinate algebra are then recovered from this representation.

Assume throughout that
$$
q=e^h>0,
\qquad
q\ne1.
$$
Define the forward Jackson operators on $\mathcal X$ by
\begin{align*}
\mathcal J_x
&:=
\frac{1}{q^2-1}
\mathcal M_x^{-1}
\bigl(
\mathcal U_x(2h)-\Id_{\mathcal X}
\bigr),\\
\mathcal J_p
&:=
\frac{1}{q^2-1}
\mathcal M_p^{-1}
\bigl(
\mathcal U_p(2h)-\Id_{\mathcal X}
\bigr).
\end{align*}
Since $x, p>0$ on $\mathbb R_+^2$,
$\mathcal M_x^{-1}=\mathcal M_{1/x}$,
$\mathcal M_p^{-1}=\mathcal M_{1/p}$,
and both inverse multiplication operators belong to
$\mathcal L(\mathcal X)$. Pointwise,
\begin{align*}
(\mathcal J_xF)(x,p)
&=
\frac{F(q^2x,p)-F(x,p)}{(q^2-1)x},\\
(\mathcal J_pF)(x,p)
&=
\frac{F(x,q^2p)-F(x,p)}{(q^2-1)p}.
\end{align*}
The coupled $p$-direction operator and its pointwise action are
\begin{align*}
\mathcal D_p:=\mathcal J_p\mathcal U_x(h),\qquad
(\mathcal D_pF)(x, p)=\frac{F(qx, q^2p)-F(qx, p)}{(q^2-1)p}.
\end{align*}
The preliminary $x$-dilation in $\mathcal D_p$ is the differential
counterpart of the dilation contained in
$$
\Pi_{\mathbf p}=\mathcal M_p\mathcal U_x(h).
$$

We now introduce the specific left covariant calculus underlying the formal
Hamiltonian model.  It is the quantum-plane calculus developed from the
Wess--Zumino framework and its $q$-analysis realizations
\cite{WessZumino1991,Ubriaco1992,Lavagno2006}.  To state its compatibility
with the Jackson realization compactly, let $\mathfrak D_q$ denote the unital
associative algebra generated by $\mathbf x$, $\mathbf p$, $\boldsymbol\partial_{\mathbf x}$, $\boldsymbol\partial_{\mathbf p}$,
subject to
\begin{align}
\begin{aligned}
&\mathbf p\mathbf x=q\mathbf x\mathbf p,\,\,
\boldsymbol\partial_{\mathbf p}\mathbf x=q\mathbf x\boldsymbol\partial_{\mathbf p},\,\,
\boldsymbol\partial_{\mathbf x}\mathbf p=q\mathbf p\boldsymbol\partial_{\mathbf x},\,\,
\boldsymbol\partial_{\mathbf x}\mathbf x=1_{\mathfrak D_q}+q^2\mathbf x\boldsymbol\partial_{\mathbf x},\\
&\boldsymbol\partial_{\mathbf p}\mathbf p=1_{\mathfrak D_q}+q^2\mathbf p\boldsymbol\partial_{\mathbf p}+(q^2-1)\mathbf x\boldsymbol\partial_{\mathbf x}, \,\,
\boldsymbol\partial_{\mathbf p}\boldsymbol\partial_{\mathbf x}=q^{-1}\boldsymbol\partial_{\mathbf x}
\boldsymbol\partial_{\mathbf p}.
\end{aligned}
\label{eq:Dq-relations}
\end{align}
We use these left covariant relations in the form adopted in
\cite{YangDeng2025}.  Here $\mathfrak D_q$ serves only to package the
coordinate and derivative generators into one compatibility algebra; after
the representation theorem below, the analysis proceeds through the induced
$q$-derivatives on $\mathfrak A_q$ and their Jackson realizations.

\begin{theorem}[Representation of the covariant differential algebra]
\label{thm:full-differential-representation}
The four operators $\mathcal M_x$,
$\mathcal M_p\mathcal U_x(h)$,
$\mathcal J_x$,
$\mathcal D_p$
satisfy the defining relations \eqref{eq:Dq-relations}. Consequently, the
assignment
\begin{align*}
\begin{aligned}
&\mathbf x\longmapsto\mathcal M_x, &&\mathbf p\longmapsto\mathcal M_p\mathcal U_x(h),\\
&\boldsymbol\partial_{\mathbf x}\longmapsto\mathcal J_x, && \boldsymbol\partial_{\mathbf p}\longmapsto\mathcal D_p
\end{aligned}
\end{align*}
extends uniquely to a unital algebra representation of $\mathfrak D_q$ on
$\mathcal X$.  Moreover, the natural homomorphism
$\mathfrak A_q\longrightarrow\mathfrak D_q$ induced by the two coordinate
generators is injective.  After identifying $\mathfrak A_q$ with its image in
$\mathfrak D_q$, the restriction of the differential-algebra representation is
the faithful coordinate representation $\Pi$ constructed above.
\end{theorem}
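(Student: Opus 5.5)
The plan is to verify the six relations in \eqref{eq:Dq-relations} at the operator level. To do this, every operator is rewritten as a composition of multiplication operators and dilations, and the following elementary commutation rules are used:
\begin{align*}
\mathcal U_x(t)\mathcal M_a\mathcal U_x(-t)=\mathcal M_{a(e^tx,p)},\qquad
\mathcal U_p(t)\mathcal M_a\mathcal U_p(-t)=\mathcal M_{a(x,e^tp)},\qquad
\mathcal U_x(t)\mathcal U_p(s)=\mathcal U_p(s)\mathcal U_x(t).
\end{align*}
The third rule is Proposition~\ref{prop:dilation-groups}. The first two follow directly from the definitions. In particular, $\mathcal U_x(t)\mathcal M_x=e^t\mathcal M_x\mathcal U_x(t)$, the operator $\mathcal U_x(t)$ commutes with $\mathcal M_p$ and with $\mathcal M_p^{-1}$, and analogous statements hold with the roles of $x$ and $p$ exchanged. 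The relation $\mathbf p\mathbf x=q\mathbf x\mathbf p$ has already been verified in Proposition~\ref{prop:coordinate-representation}.

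The three \emph{homogeneous} relations are then pure commutation computations. For $\mathcal D_p\mathcal M_x=q\mathcal M_x\mathcal D_p$, write $\mathcal D_p=\frac{1}{q^2-1}\mathcal M_p^{-1}(\mathcal U_p(2h)-\Id_{\mathcal X})\mathcal U_x(h)$. Commuting $\mathcal M_x$ to the left picks up the factor $q$ from $\mathcal U_x(h)$ and nothing from the remaining factors. For $\mathcal J_x\Pi_{\mathbf p}=q\Pi_{\mathbf p}\mathcal J_x$, moving $\mathcal M_p\mathcal U_x(h)$ through $\mathcal J_x$ produces the factor $q$ from $\mathcal M_x^{-1}$ passing $\mathcal U_x(h)$. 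The identity $\mathcal D_p\mathcal J_x=q^{-1}\mathcal J_x\mathcal D_p$ is handled similarly. Since all dilations commute, the only scalar produced comes from the factor $\mathcal U_x(h)$ in $\mathcal D_p$ passing the factor $\mathcal M_x^{-1}$ in $\mathcal J_x$.

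The two \emph{inhomogeneous} relations are checked pointwise. For the $x$-relation,
\[
(\mathcal J_x\mathcal M_xF)(x,p)=\frac{q^2F(q^2x,p)-F(x,p)}{q^2-1}=F(x,p)+q^2x(\mathcal J_xF)(x,p).
\]
The main obstacle is the mixed relation
\[
\mathcal D_p\,\mathcal M_p\mathcal U_x(h)=\Id_{\mathcal X}+q^2\mathcal M_p\mathcal U_x(h)\mathcal D_p+(q^2-1)\mathcal M_x\mathcal J_x.
\]
Here I would compute both sides on an arbitrary $F$. The left side equals
\[
\frac{q^2F(q^2x,q^2p)-F(q^2x,p)}{q^2-1}.
\]
On the right side, the middle term equals $\frac{q^2}{q^2-1}\bigl(F(q^2x,q^2p)-F(q^2x,p)\bigr)$ and the last term equals $F(q^2x,p)-F(x,p)$. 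Collecting the coefficients of $F(q^2x,q^2p)$, $F(q^2x,p)$ and $F(x,p)$ should show that the two sides agree. The point to watch is that $\mathcal U_x(h)$ appears twice, so the $x$-argument is shifted by $q^2$ and not by $q$.

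Once the relations hold, the universal property of the quotient of the free algebra on four generators gives a unique unital representation $\widetilde\Pi$ of $\mathfrak D_q$. Since the coordinate relation is among the defining relations of $\mathfrak D_q$, there is a natural homomorphism $\iota:\mathfrak A_q\to\mathfrak D_q$, and $\widetilde\Pi\circ\iota$ agrees with $\Pi$ on generators, hence everywhere. Theorem~\ref{thm:normal-ordering-faithfulness} shows that $\Pi$ is injective, so $\iota$ is injective as well, and the restriction statement follows.
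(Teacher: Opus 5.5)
Your proposal is correct and follows essentially the same route as the paper: you get the homogeneous relations by commuting $\mathcal U_x(h)$ past $\mathcal M_x^{\pm1}$, check the Jackson-type relations directly, and deduce injectivity of $\mathfrak A_q\to\mathfrak D_q$ from the universal property together with the faithfulness of $\Pi$. The only difference is that the paper verifies the mixed relation at the operator level, writing $\mathcal D_p\Pi_{\mathbf p}=\mathcal J_p\mathcal M_p\mathcal U_x(2h)$ and using $\mathcal J_p\mathcal M_p=\Id_{\mathcal X}+q^2\mathcal M_p\mathcal J_p$ and $\mathcal U_x(2h)=\Id_{\mathcal X}+(q^2-1)\mathcal M_x\mathcal J_x$; your pointwise comparison is equivalent and does close, with coefficients $q^2/(q^2-1)$, $-1/(q^2-1)$ and $0$ on both sides (note also that the paper already uses $\iota$ for $\mathcal P\to\mathcal X$).
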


\begin{proof}
The coordinate relation is Proposition~\ref{prop:coordinate-representation}.
We verify the remaining operator identities.
Since $\mathcal J_p$ commutes with $\mathcal M_x$ and
$\mathcal U_x(h)\mathcal M_x=q\mathcal M_x\mathcal U_x(h)$,
\begin{align*}
\mathcal D_p\mathcal M_x=
q\mathcal M_x\mathcal D_p.
\end{align*}
Moreover, direct use of the definitions gives
\begin{align*}
\mathcal J_x\mathcal U_x(h)=
q\mathcal U_x(h)\mathcal J_x.
\end{align*}
Because $\mathcal M_p$ commutes with both factors,
\begin{align*}
\mathcal J_x\Pi_{\mathbf p}=
q\Pi_{\mathbf p}\mathcal J_x.
\end{align*}
The one-variable Jackson identities are
\begin{align*}
\mathcal J_x\mathcal M_x=
\Id_{\mathcal X}+q^2\mathcal M_x\mathcal J_x,\qquad
\mathcal J_p\mathcal M_p=\Id_{\mathcal X}+q^2\mathcal M_p\mathcal J_p.
\end{align*}
Using the commutation of $\mathcal U_x(h)$ with $\mathcal M_p$ and
$\mathcal J_p$, together with
$$
\mathcal U_x(2h)=
\Id_{\mathcal X}+(q^2-1)\mathcal M_x\mathcal J_x,
$$
we obtain
\begin{align*}
\mathcal D_p\Pi_{\mathbf p}=
\Id_{\mathcal X}+q^2\Pi_{\mathbf p}\mathcal D_p+(q^2-1)\mathcal M_x\mathcal J_x.
\end{align*}
Finally, $\mathcal J_x$ commutes with $\mathcal J_p$, and the preceding
covariance identity is equivalent to
$$
\mathcal U_x(h)\mathcal J_x=q^{-1}\mathcal J_x\mathcal U_x(h).
$$
Therefore,
\begin{align*}
\mathcal D_p\mathcal J_x=q^{-1}\mathcal J_x\mathcal D_p.
\end{align*}
These identities are exactly the operator images of the differential relations in
\eqref{eq:Dq-relations}. The universal property of the
quotient algebra gives the asserted representation.  On the coordinate
generators it agrees with $\Pi$.  If an element of $\mathfrak A_q$ mapped to
zero under the natural homomorphism into $\mathfrak D_q$, its image under this
operator representation would therefore be zero under $\Pi$; the faithfulness
of $\Pi$ then forces the original coordinate element to vanish.  Hence the
natural homomorphism $\mathfrak A_q\to\mathfrak D_q$ is injective.
\end{proof}

Thus multiplication, dilation, and Jackson operators realize the coordinate
and derivative generators simultaneously.  In particular, the concrete
formulas that appear in pointwise $q$-Hamiltonian equations are consequences
of an algebra representation rather than independent substitution rules.

For later use, define
\begin{align*}
[z]_{q^2}
&:=
\frac{q^{2z}-1}{q^2-1},
\qquad z\in\mathbb C.
\end{align*}
The PBW basis from Theorem~\ref{thm:normal-ordering-faithfulness} allows us to
define the induced left $q$-derivatives on $\mathfrak A_q$ without retaining
the ambient algebra $\mathfrak D_q$ in the notation.

\begin{definition}[Induced left $q$-derivatives]
\label{def:induced-q-derivatives}
We reserve $\bdx$ and $\bdp$ for the derivative generators of
$\mathfrak D_q$. Their induced left actions on the coordinate algebra are
denoted by the linear maps
\begin{align*}
\qdx,\qdp:
\mathfrak A_q
\longrightarrow
\mathfrak A_q.
\end{align*}
They are defined on the PBW basis by
\begin{align}
\qdx
\left(
\mathbf x^m\mathbf p^n
\right)
&=
\begin{cases}
[m]_{q^2}\mathbf x^{m-1}\mathbf p^n,
& m\ge1,\\
0,
& m=0,
\end{cases}
\notag\\
\qdp
\left(
\mathbf x^m\mathbf p^n
\right)
&=
\begin{cases}
q^m[n]_{q^2}\mathbf x^m\mathbf p^{n-1},
& n\ge1,\\
0,
& n=0,
\end{cases}
\label{eq:induced-q-derivatives}
\end{align}
for all $m,n\in\mathbb N_0$, and then extended linearly.
\end{definition}

These maps are the derivative-free components of the covariant left action
on the ordered coordinate monomials.  Indeed, if one starts with
$\bdx\mathbf x^m\mathbf p^n$ or
$\bdp\mathbf x^m\mathbf p^n$ and repeatedly uses the defining relations to
move the single derivative generator to the right, the derivative-free terms
are exactly those in \eqref{eq:induced-q-derivatives}.  This
observation is all that is required below; no PBW statement for the full
differential algebra $\mathfrak D_q$ is assumed.  In particular,
$$
\qdx 1_{\mathfrak A_q}=0,
\qquad
\qdp 1_{\mathfrak A_q}=0.
$$

\section{Normal symbols and the representation--symbol correspondence}
\label{sec:symbols}

We now pass from the represented noncommutative algebra to a commutative
symbol space without discarding the noncommutative product.  Normal ordering
produces a vector-space identification with polynomial symbols, while
operator composition becomes an explicit associative star product.  This
symbol calculus permits the covariant derivatives and the ordered
$q$-Hamiltonian action to be transported exactly.  The resulting
representation--symbol theorem is the central algebraic statement of the
paper.

\subsection{Normal symbols, quantization, and the induced star product}
\label{subsec:normal-symbols}

Let
$$
\mathcal P
=
\mathbb C[x,p]
$$
be the commutative polynomial algebra. Every $F\in\mathcal P$ has a unique
finite expansion
\begin{align*}
F(x,p)
&=
\sum_{m,n\ge0}c_{mn}(F)x^mp^n.
\end{align*}
For $F\ne0$, define its degree in the $p$ variable by
\begin{align*}
N_p(F)
&:=
\max
\left\{
n:
c_{mn}(F)\ne0
\text{ for some }m
\right\},
\end{align*}
and set $N_p(0)=0$. This notation will be used throughout the operator and
classical-limit estimates below.

We regard the variables $x$ and $p$
here as the ordinary coordinate functions on $\mathbb R_+^2$. Let
\begin{align*}
\iota:
\mathcal P
\longrightarrow
\mathcal X
\end{align*}
denote the natural injective algebra homomorphism that regards a polynomial as
a smooth function on $\mathbb R_+^2$. We shall usually identify
$\mathcal P$ with the subalgebra $\iota(\mathcal P)\subset\mathcal X$, but
retain $\iota$ whenever the ambient space is relevant.

Define the normal-ordering map
\begin{align*}
\mathcal N_q:
\mathcal P
\longrightarrow
\mathfrak A_q
\end{align*}
on monomials by
\begin{align*}
\mathcal N_q(x^mp^n)
&=
\mathbf x^m\mathbf p^n,
\qquad
m,n\in\mathbb N_0,
\end{align*}
and extend it linearly. By
Theorem~\ref{thm:normal-ordering-faithfulness}, the ordered monomials
$\mathbf x^m\mathbf p^n$ form a vector-space basis of $\mathfrak A_q$.
Consequently, $\mathcal N_q$ is a vector-space isomorphism. Its inverse
\begin{align*}
\sigma_N:
\mathfrak A_q
\longrightarrow
\mathcal P
\end{align*}
is called the normal-symbol map.
The corresponding operator quantization map is
\begin{align*}
\Op_h
&:=
\Pi\circ\mathcal N_q:
\mathcal P
\longrightarrow
\mathcal L(\mathcal X).
\end{align*}
Thus the passage
$$
F
\longmapsto
\mathcal N_q(F)
\longmapsto
\Op_h(F)
$$
first replaces a commutative polynomial by its normally ordered
quantum-plane element and then applies the faithful dilation representation.

%Recall that $\oneX\in\mathcal X$
%denotes the constant function $\oneX(x, p)=1$.

\begin{proposition}
\label{prop:explicit-quantization}
For every $m, n\in\mathbb N_0$,
\begin{align}
\Op_h(x^mp^n)
&=
\mathcal M_{x^mp^n}\mathcal U_x(nh).
\label{eq:Op-monomial}
\end{align}
Consequently, if $F(x, p)=\sum_{m,n}c_{mn}x^mp^n$,
then, for every $\psi\in\mathcal X$,
\begin{align}
\bigl(\Op_h(F)\psi\bigr)(x, p)=\sum_{m, n}
c_{mn}x^mp^n\psi(q^nx, p).
\label{eq:Op-action}
\end{align}
Moreover,
$\Op_h(F)\oneX=\iota(F)$.
In particular, $\Op_h$ is injective.
\end{proposition}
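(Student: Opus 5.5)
The plan is to unwind the definition $\Op_h=\Pi\circ\mathcal N_q$ on monomials, extend by linearity, and then read off injectivity from the action on $\oneX$.

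\emph{Monomial formula.} Since $\mathcal N_q(x^mp^n)=\mathbf x^m\mathbf p^n$ and $\Pi$ is a unital algebra homomorphism, we have $\Op_h(x^mp^n)=\Pi_{\mathbf x}^m\Pi_{\mathbf p}^n=\mathcal M_x^m\bigl(\mathcal M_p\mathcal U_x(h)\bigr)^n$. The key observation is that $\mathcal M_p$ commutes with $\mathcal U_x(h)$, because the $x$-dilation does not act on the $p$ variable. Hence
\[
\bigl(\mathcal M_p\mathcal U_x(h)\bigr)^n=\mathcal M_p^n\,\mathcal U_x(h)^n=\mathcal M_{p^n}\,\mathcal U_x(nh),
\]
where the last step uses the group law from Proposition~\ref{prop:dilation-groups}. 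Combining this with $\mathcal M_x^m\mathcal M_{p^n}=\mathcal M_{x^mp^n}$ gives \eqref{eq:Op-monomial}. This is precisely the identity already invoked in the proof of Theorem~\ref{thm:normal-ordering-faithfulness}. An induction on $n$ makes the commutation step fully rigorous.

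\emph{Action formula.} For $F=\sum c_{mn}x^mp^n$, linearity of $\mathcal N_q$ and of $\Pi$ gives $\Op_h(F)=\sum c_{mn}\mathcal M_{x^mp^n}\mathcal U_x(nh)$. Applying this to $\psi$ and using $(\mathcal U_x(nh)\psi)(x,p)=\psi(e^{nh}x,p)=\psi(q^nx,p)$ yields \eqref{eq:Op-action}. Equivalently, one can cite Corollary~\ref{cor:concrete-coordinate-representation} directly, since the coefficients $a_{mn}$ of $\mathcal N_q(F)$ are exactly the $c_{mn}$.

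\emph{Action on $\oneX$ and injectivity.} Taking $\psi=\oneX$ in \eqref{eq:Op-action}, every dilated factor $\oneX(q^nx,p)$ equals $1$. So $\Op_h(F)\oneX=\sum c_{mn}x^mp^n=\iota(F)$. If $\Op_h(F)=0$, then $\iota(F)=0$, and since $\iota$ is injective (a polynomial vanishing on the open set $\mathbb R_+^2$ is zero), $F=0$. Alternatively, injectivity follows because $\mathcal N_q$ is bijective and $\Pi$ is injective by Theorem~\ref{thm:normal-ordering-faithfulness}.

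The only step needing any care is the commutation of $\mathcal M_p$ past $\mathcal U_x(h)$ in the power $\Pi_{\mathbf p}^n$. It is harmless here because no $\mathcal M_x$ factor sits between them; everything else is routine.
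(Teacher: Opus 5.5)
Your proposal is correct and follows essentially the same route as the paper. The only difference is that the paper cites Corollary~\ref{cor:concrete-coordinate-representation} for the monomial formula, whereas you re-derive it by commuting $\mathcal M_p$ past $\mathcal U_x(h)$. The remaining steps match: linearity, the fact that dilations fix $\oneX$, and injectivity deduced from that of $\iota$.
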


\begin{proof}
By Corollary~\ref{cor:concrete-coordinate-representation},
\begin{align*}
\Op_h(x^mp^n)=
\Pi\left(
\mathbf x^m\mathbf p^n
\right)=
\mathcal M_{x^mp^n}\mathcal U_x(nh),
\end{align*}
which proves \eqref{eq:Op-monomial}. Formula \eqref{eq:Op-action} follows by
linearity.
Since every dilation fixes the constant function,
$$
\mathcal U_x(nh)\oneX=\oneX.
$$
Therefore,
\begin{align*}
\Op_h(F)\oneX=\sum_{m, n}c_{mn}\mathcal M_{x^m p^n}\mathcal U_x(nh)\oneX=
\sum_{m,n}c_{mn}x^mp^n=\iota(F),
\end{align*}
where the last equality is an equality in $\mathcal X$.

If $\Op_h(F)=0$, then
$\Op_h(F)\oneX=0$, and hence
$\iota(F)=0$. Since $\iota$ is injective, $F=0$. Thus
$\Op_h$ is injective.
\end{proof}

Since $\Op_h$ is injective, it is a vector-space isomorphism from
$\mathcal P$ onto $\operatorname{Im}(\Op_h)$. We denote its inverse by
\begin{align*}
\sigma_0:=
\Op_h^{-1}: \operatorname{Im}(\Op_h)
\longrightarrow
\mathcal P.
\end{align*}
This inverse has the explicit realization
\begin{align*}
\sigma_0(T)
&=
\iota^{-1}
\left(
T\oneX
\right).
\end{align*}
Indeed, if $T=\Op_h(F)$, then
$$
T\oneX
=
\Op_h(F)\oneX
=
\iota(F),
$$
because every dilation factor fixes the constant function. Consequently,
\begin{align*}
\sigma_0\circ\Op_h=\Id_{\mathcal P},\qquad
\Op_h\circ\sigma_0=\Id_{\operatorname{Im}(\Op_h)}.
\end{align*}
Moreover, since
$\Pi=\Op_h\circ\sigma_N$ on $\mathfrak A_q$,
we obtain
\begin{align*}
\sigma_0\circ\Pi=\sigma_N
\,\,
\text{on }\mathfrak A_q.
\end{align*}
Thus the normal symbol can be recovered either algebraically through
$\sigma_N$ or from the represented operator through $\sigma_0$.

\begin{definition}[Normal star product]
\label{def:star-product}
For $F, G\in\mathcal P$, define
\begin{align*}
F\star_hG
&:=
\sigma_N
\left(
\mathcal N_q(F)\mathcal N_q(G)
\right).
\end{align*}
Equivalently,
\begin{align}
\Op_h(F\star_hG)
&=
\Op_h(F)
\Op_h(G).
\label{eq:star-operator}
\end{align}
\end{definition}

The equivalence follows from
$\Op_h=\Pi\circ\mathcal N_q$, the multiplicativity of $\Pi$,
and the injectivity of $\Op_h$. The product $\star_h$ therefore
transfers the multiplication of the quantum-plane algebra, or equivalently
operator composition in the represented algebra, to the commutative vector
space $\mathcal P$.

\begin{theorem}[Dilation-induced normal star product]
\label{thm:star-product}
The product $\star_h$ is a unital associative product on $\mathcal P$. For
ordered monomials, it satisfies
\begin{align}
(x^m p^n)\star_h(x^r p^s)
&=
q^{nr}x^{m+r}p^{n+s}.
\label{eq:star-monomial}
\end{align}
More generally, for all $F, G\in\mathcal P$,
\begin{align}
F\star_hG
&=
\sum_{k=0}^{\infty}
\frac{h^k}{k!}
\left(
\mathcal A_p^kF
\right)
\left(
\mathcal A_x^kG
\right).
\label{eq:star-full-series}
\end{align}
For fixed polynomials $F$ and $G$, the right-hand side is entire as a
function of $h$. In particular,
\begin{align*}
p\star_h x
&=
qxp,\\
x\star_h p
&=
xp.
\end{align*}
\end{theorem}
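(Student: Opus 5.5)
The argument has three steps: obtain the monomial formula from the operator realization, read off associativity and the unit from transport of structure, and then derive the series \eqref{eq:star-full-series} by bilinearity and an eigenvalue calculation.

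\emph{Unit and associativity.} These need no computation. By Definition~\ref{def:star-product}, $\star_h$ is the multiplication of $\mathfrak A_q$ transported through the linear isomorphism $\mathcal N_q$, whose inverse is $\sigma_N$ (Theorem~\ref{thm:normal-ordering-faithfulness}). Associativity of $\mathfrak A_q$ therefore gives associativity of $\star_h$. Since $\mathcal N_q(1)=1_{\mathfrak A_q}$, the polynomial $1$ is a two-sided unit.

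\emph{The monomial formula \eqref{eq:star-monomial}.} I would use \eqref{eq:star-operator} together with Proposition~\ref{prop:explicit-quantization}, which gives
\[
\Op_h(x^mp^n)\,\Op_h(x^rp^s)=\mathcal M_{x^mp^n}\mathcal U_x(nh)\mathcal M_{x^rp^s}\mathcal U_x(sh).
\]
The basic intertwining identity is $\mathcal U_x(t)\mathcal M_a=\mathcal M_{\mathcal U_x(t)a}\mathcal U_x(t)$. For $a=x^rp^s$ one has $\mathcal U_x(nh)x^rp^s=q^{nr}x^rp^s$, so the product equals
\[
q^{nr}\mathcal M_{x^{m+r}p^{n+s}}\mathcal U_x((n+s)h)=\Op_h\bigl(q^{nr}x^{m+r}p^{n+s}\bigr).
\]
Injectivity of $\Op_h$ then yields \eqref{eq:star-monomial}. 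An equivalent algebraic route is to commute $\mathbf p^n$ past $\mathbf x^r$ using $\mathbf p^n\mathbf x^r=q^{nr}\mathbf x^r\mathbf p^n$, which follows by induction from $\mathbf p\mathbf x=q\mathbf x\mathbf p$. Taking $(m,n,r,s)=(0,1,1,0)$ gives $p\star_hx=qxp$, and $(1,0,0,1)$ gives $x\star_hp=xp$.

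\emph{The series \eqref{eq:star-full-series}.} Both sides are bilinear in $(F,G)$, so it suffices to check monomials. The Euler operators act diagonally: $\mathcal A_p^k(x^mp^n)=n^kx^mp^n$ and $\mathcal A_x^k(x^rp^s)=r^kx^rp^s$. The right-hand side for monomials is therefore
\[
\sum_k \frac{(hnr)^k}{k!}\,x^{m+r}p^{n+s}=e^{hnr}x^{m+r}p^{n+s}=q^{nr}x^{m+r}p^{n+s},
\]
which matches \eqref{eq:star-monomial}.

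For general polynomials $F,G$ the sum over monomial pairs is finite. Each coefficient of $x^ap^b$ on the right-hand side is then a finite combination of exponentials $e^{hnr}$, hence entire in $h$. The series converges absolutely coefficientwise for every $h$, so exchanging the finite monomial sum with the sum over $k$ is legitimate.

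The main point needing care is the meaning of the infinite series. It must be read coefficientwise in the finite-dimensional space spanned by the relevant monomials, and the interchange of the finite and infinite sums should be justified explicitly. Everything else is direct.
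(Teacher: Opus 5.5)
Your proposal is correct and follows essentially the same route as the paper. Both obtain the monomial formula by composing $\Op_h(x^mp^n)\Op_h(x^rp^s)$, commuting $\mathcal U_x(nh)$ past $\mathcal M_{x^rp^s}$ to produce $q^{nr}$, and invoking injectivity of $\Op_h$; both then get the series from the diagonal action of the Euler operators and bilinearity, and transport the unit and associativity from $\mathfrak A_q$ (or from operator composition).
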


\begin{proof}
Using \eqref{eq:Op-monomial}, we obtain
\begin{align*}
\Op_h(x^mp^n)\Op_h(x^r p^s)=
\mathcal M_{x^m p^n}\mathcal U_x(nh)\mathcal M_{x^rp^s}\mathcal U_x(sh).
\end{align*}
Since
$\mathcal U_x(nh)\mathcal M_{x^r p^s}=
q^{nr}\mathcal M_{x^r p^s}\mathcal U_x(nh)$,
it follows that
\begin{align*}
\Op_h(x^m p^n)
\Op_h(x^r p^s)=q^{nr}\mathcal M_{x^{m+r}p^{n+s}}\mathcal U_x((n+s)h)=
\Op_h\left(q^{nr}x^{m+r}p^{n+s}\right).
\end{align*}
The injectivity of $\Op_h$ proves
\eqref{eq:star-monomial}.

For the same monomials,
$\mathcal A_p^k(x^m p^n)=n^k x^m p^n$,
$\mathcal A_x^k(x^r p^s)=r^k x^r p^s$.
Therefore,
\begin{align*}
\sum_{k=0}^{\infty}
\frac{h^k}{k!}
\left(
\mathcal A_p^k(x^mp^n)
\right)
\left(
\mathcal A_x^k(x^rp^s)
\right)&=
\sum_{k=0}^{\infty}
\frac{(hnr)^k}{k!}
x^{m+r}p^{n+s}\nonumber\\
&=e^{hnr}x^{m+r}p^{n+s}=
q^{nr}x^{m+r}p^{n+s}.
\end{align*}
Bilinearity gives \eqref{eq:star-full-series} for arbitrary polynomials.

The constant polynomial $1_{\mathcal P}$ is the two-sided unit of
$(\mathcal P,\star_h)$, since
$$
1_{\mathcal P}\star_h F
=
F\star_h 1_{\mathcal P}
=
F.
$$
Associativity follows either from the defining product in $\mathfrak A_q$ or
from
\eqref{eq:star-operator} and the associativity of operator composition.
\end{proof}

\begin{remark}
\label{rem:normal-ordering-star-product}
The product $\star_h$ is associated with the ordering convention in which all
powers of $\mathbf x$ are placed to the left of all powers of $\mathbf p$.
A different ordering convention generally produces a different symbol product.
Thus the explicit factor $q^{nr}$ in \eqref{eq:star-monomial} is tied to the
chosen normal-ordering map $\mathcal N_q$. At a broader level, such products
are related to twists and universal deformation formulas
\cite{GiaquintoZhang1998} and to covariant star-product realizations of
quantum spaces \cite{Blohmann2003}. Deformation-theoretic and Hochschild
cohomological properties of the quantum plane and the $q$-Weyl algebra are
studied in \cite{GerstenhaberGiaquinto2014}; those results provide algebraic
background but are not used in the explicit calculation above.
\end{remark}

The exact series immediately separates ordinary multiplication from its
deformation corrections.

\begin{proposition}
\label{prop:star-expansion}
For $F, G\in\mathcal P$,
\begin{align*}
F\star_hG=FG+h\left(\mathcal A_pF\right)\left(\mathcal A_xG\right)+
\frac{h^2}{2}\left(\mathcal A_p^2F\right)
\left(\mathcal A_x^2G\right)+O(h^3)
\end{align*}
as $h\to 0$.
\end{proposition}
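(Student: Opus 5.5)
This is an immediate consequence of the exact series \eqref{eq:star-full-series} in Theorem~\ref{thm:star-product}. For fixed $F,G\in\mathcal P$, we split off the terms with $k=0,1,2$ and bound the remainder. We do not need to re-examine the operator composition: the terms $k=0$, $k=1$ and $k=2$ of \eqref{eq:star-full-series} are exactly $FG$, $h(\mathcal A_pF)(\mathcal A_xG)$ and $\tfrac{h^2}{2}(\mathcal A_p^2F)(\mathcal A_x^2G)$. It therefore remains to show that
$$
R_h:=\sum_{k\ge3}\frac{h^k}{k!}\left(\mathcal A_p^kF\right)\left(\mathcal A_x^kG\right)
$$
is $O(h^3)$, and we first have to fix the sense in which this holds.

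Since $F$ and $G$ are polynomials, $R_h$ lies in the finite-dimensional space of polynomials of bidegree at most that of $FG$. There all norms are equivalent, so coefficientwise $O(h^3)$ is the natural meaning. It also implies $O(h^3)$ in every seminorm $\|\cdot\|_{K,r}$ after applying $\iota$. To obtain the bound, we write $F=\sum c_{mn}x^mp^n$ and $G=\sum d_{rs}x^rp^s$. By the monomial computation in the proof of Theorem~\ref{thm:star-product},
$$
R_h=\sum_{m,n,r,s}c_{mn}d_{rs}\Bigl(e^{hnr}-1-hnr-\tfrac{(hnr)^2}{2}\Bigr)x^{m+r}p^{n+s}.
$$

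Each scalar factor satisfies $\bigl|e^{z}-1-z-z^2/2\bigr|\le |z|^3e^{|z|}/6$ with $z=hnr$. Since only finitely many pairs $(n,r)$ occur, with $n\le N_p(F)$ and $r$ bounded by the $x$-degree of $G$, each coefficient of $R_h$ is bounded by a constant times $|h|^3$ for $|h|\le1$. This gives the claim. Equivalently, one can note that the right-hand side is entire in $h$, so Taylor's theorem with remainder applies coefficientwise.

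There is no real obstacle here. The only point needing care is to state that the $O(h^3)$ term holds for fixed $F$ and $G$, coefficientwise and hence in the Fréchet topology on compacts. It is not uniform in the degrees, because the constants grow like $(N_p(F)\deg_xG)^3e^{|h|N_p(F)\deg_xG}$.
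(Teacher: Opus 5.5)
Your proposal is correct and follows the paper's proof: the paper also keeps the $k=0,1,2$ terms of the exact series \eqref{eq:star-full-series} and concludes that the tail is $O(h^3)$ because the series is entire in $h$ for fixed polynomials. Your coefficientwise bound $|e^{z}-1-z-z^2/2|\le |z|^3e^{|z|}/6$ with $z=hnr$ just makes that remainder estimate explicit, and it also specifies the sense in which the $O(h^3)$ holds.
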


\begin{proof}
Retain the terms of orders zero, one, and two in the exact series
\eqref{eq:star-full-series}. Since the series is entire in $h$ for fixed
polynomials, the remaining terms are $O(h^3)$.
\end{proof}

\subsection{Commutative asymptotics of the operator realization}
\label{subsec:operator-expansion}

The exact normal symbol of a represented operator should be distinguished
from the multiplication operator determined by that symbol.  We next quantify
this distinction.  The following expansion compares the dilation
quantization map with ordinary multiplication and is therefore an asymptotic
statement in the deformation parameter, in contrast with the exact symbol
identities above.

\begin{theorem}[First-order commutative expansion of the quantization map]
\label{thm:operator-polynomial-expansion}
Let $F\in\mathcal P$. Fix $h_0>0$ and
$K\Subset\mathbb R_+^2$, and define
\begin{align*}
K_{F,h_0}:=
\left\{
(e^t x,p):
(x,p)\in K,
|t|\le h_0N_p(F)
\right\}.
\end{align*}
Then $K_{F,h_0}\Subset\mathbb R_+^2$.
Moreover, for every $\psi\in\mathcal X$ and every $|h|\le h_0$,
\begin{align*}
&\Op_h(F)\psi=F\psi+h\left(\mathcal A_pF\right)\left(\mathcal A_x\psi\right)+\mathcal R_h^F\psi,\\
&\left\|\mathcal R_h^F\psi\right\|_{C^0(K)}\le\frac{|h|^2}{2}C_{F,K}\left\|\mathcal A_x^2\psi\right\|_{C^0(K_{F,h_0})},\\
&C_{F,K}:=\sum_{m,n}n^2|c_{mn}(F)|\left\|x^mp^n\right\|_{C^0(K)},\,\,
\Op_h(F)=\mathcal M_F+h\mathcal M_{\mathcal A_pF}\mathcal A_x+\mathcal R_h^F.
\end{align*}
After application to any fixed $\psi\in\mathcal X$, the remainder is
$O(h^2)$ locally uniformly on compact subsets of $\mathbb R_+^2$.
\end{theorem}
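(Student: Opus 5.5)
Start from the explicit action \eqref{eq:Op-action} of Proposition~\ref{prop:explicit-quantization}, which reduces the claim to a one-variable Taylor expansion of $\psi$ along the dilation orbit. The compactness of $K_{F,h_0}$ comes first: it is the image of the compact set $K\times[-h_0N_p(F),h_0N_p(F)]$ under the continuous map $((x,p),t)\mapsto(e^tx,p)$. That map takes values in $\mathbb R_+^2$, so $K_{F,h_0}\Subset\mathbb R_+^2$.

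For the expansion, fix $(x,p)\in K$ and a monomial index $n\le N_p(F)$. Set $g(t):=\psi(e^tx,p)=(\mathcal U_x(t)\psi)(x,p)$. By Proposition~\ref{prop:dilation-groups}, or directly by the chain rule,
\begin{align*}
g'(t)=(\mathcal A_x\psi)(e^tx,p),\qquad g''(t)=(\mathcal A_x^2\psi)(e^tx,p).
\end{align*}
Taylor's formula with integral remainder at $t=nh$ gives
\begin{align*}
\psi(q^nx,p)=\psi(x,p)+nh(\mathcal A_x\psi)(x,p)+\int_0^{nh}(nh-t)(\mathcal A_x^2\psi)(e^tx,p)\,dt .
\end{align*}
Since $|t|\le n|h|\le h_0N_p(F)$, every point $(e^tx,p)$ in the integral lies in $K_{F,h_0}$. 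Hence the integral is bounded in modulus by $\tfrac{n^2h^2}{2}\|\mathcal A_x^2\psi\|_{C^0(K_{F,h_0})}$.

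Next multiply by $c_{mn}x^mp^n$ and sum over $(m,n)$. The zeroth-order terms reassemble to $F\psi$. The first-order terms give
\begin{align*}
h\sum_{m,n} n\,c_{mn}x^mp^n(\mathcal A_x\psi)=h(\mathcal A_pF)(\mathcal A_x\psi),
\end{align*}
because $\mathcal A_p(x^mp^n)=nx^mp^n$. So define
\begin{align*}
\mathcal R_h^F\psi:=\sum_{m,n}c_{mn}x^mp^n\int_0^{nh}(nh-t)(\mathcal A_x^2\psi)(e^tx,p)\,dt .
\end{align*}
Equivalently, $\mathcal R_h^F:=\Op_h(F)-\mathcal M_F-h\mathcal M_{\mathcal A_pF}\mathcal A_x$, which makes the operator identity a tautology and shows $\mathcal R_h^F\in\mathcal L(\mathcal X)$. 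Taking absolute values termwise and the supremum over $K$ gives the bound with $C_{F,K}=\sum n^2|c_{mn}|\,\|x^mp^n\|_{C^0(K)}$. The factor $n^2$ kills the $n=0$ terms, consistent with those terms having no remainder.

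The final sentence follows at once. For a fixed $\psi$, the quantity $\|\mathcal A_x^2\psi\|_{C^0(K_{F,h_0})}$ is finite because $K_{F,h_0}$ is compact and $\psi$ is smooth. The bound is therefore $O(h^2)$ uniformly on $K$, for every compact $K$.

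The only delicate point is the bookkeeping. First, $h$ may be negative, so the integral over $[0,nh]$ must be handled with $|nh-t|$ and the orbit condition $|t|\le h_0N_p(F)$. Second, the enlarged compact set must cover all dilation orbits that are used, which is exactly why $K_{F,h_0}$ is defined with $N_p(F)$. No further obstacle is expected, since $F$ is a polynomial and all sums are finite.
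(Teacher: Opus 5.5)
Your proposal is correct and follows the paper's proof essentially step for step: you reduce via \eqref{eq:Op-action} to a second-order Taylor expansion of $t\mapsto\psi(e^tx,p)$ at $t=nh$, bound the remainder using that the dilation segment lies in $K_{F,h_0}$, and resum using $\mathcal A_p(x^mp^n)=nx^mp^n$. Two points in your version are only implicit in the paper: the explicit integral remainder, including its sign handling for negative $h$, and the compactness of $K_{F,h_0}$ as the continuous image of $K\times[-h_0N_p(F),h_0N_p(F)]$.
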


\begin{proof}
By Proposition~\ref{prop:explicit-quantization}, the action of $\Op_h(F)$ is
given by \eqref{eq:Op-action}.  Fix a pair $(m,n)$ occurring in the polynomial expansion of $F$. For
$(x,p)\in K$, apply Taylor's
formula at $t=0$ to the function
$$
t\longmapsto\psi(e^t x, p).
$$
Its first two derivatives are
\begin{align*}
\frac{d}{dt}\psi(e^t x, p)=\left(\mathcal A_x\psi\right)(e^t x, p),\qquad
\frac{d^2}{dt^2}\psi(e^t x, p)=\left(\mathcal A_x^2\psi\right)(e^t x, p).
\end{align*}
Taking $t=nh$ gives
\begin{align*}
&\psi(q^n x, p)=\psi(x, p)+nh\left(\mathcal A_x\psi\right)(x, p)+R_{n, h}\psi(x, p),\\
&|R_{n, h}\psi(x, p)|\le\frac{n^2|h|^2}{2}\left\|\mathcal A_x^2\psi\right\|_{C^0(K_{F, h_0})}.
\end{align*}
Indeed, $|nh|\le h_0N_p(F)$,
so the entire dilation segment joining $(x, p)$ to $(q^nx, p)$ is contained
in $K_{F, h_0}$.

Substituting this Taylor expansion into \eqref{eq:Op-action}, we obtain
\begin{align*}
\Op_h(F)\psi=
\left(\sum_{m,n}c_{mn}(F)x^mp^n\right)\psi+h\left(\sum_{m, n}n c_{mn}(F)x^mp^n\right)
\mathcal A_x\psi+\mathcal R_h^F\psi.
\end{align*}
Since
$\mathcal A_pF=\sum_{m,n}n c_{mn}(F)x^mp^n$,
the first two terms are $F\psi+h\left(\mathcal A_pF\right)\left(\mathcal A_x\psi\right)$.
Summing the remainder estimates over the finitely many monomials in $F$
gives the stated bound.
\end{proof}

The theorem applies, in particular, to polynomial Hamiltonians.  For
$H\in\mathcal P$, define its quantum-plane and operator realizations by
\begin{align*}
\mathbf H:=\mathcal N_q(H),\,\,
\mathcal H_h:=\Pi(\mathbf H)=\Op_h(H),\,\,
H=\sigma_N(\mathbf H)=\sigma_0(\mathcal H_h).
\end{align*}
Thus $H$, $\mathbf H$, and $\mathcal H_h$ are the symbol, quantum-plane, and
operator representatives of the same Hamiltonian.

\begin{remark}
\label{rem:operator-H-expansion}
Applying Theorem~\ref{thm:operator-polynomial-expansion} with $F=H$ gives
\begin{align*}
\mathcal H_h
&=
\mathcal M_H
+
h\mathcal M_{\mathcal A_pH}\mathcal A_x
+
\mathcal R_h^H,
&
\mathcal R_h^H
&=
O(h^2),\\
\mathcal H_h
&=
\mathcal M_H+O(h),
&
\sigma_0(\mathcal H_h)
&=
H.
\end{align*}
The remainder estimate is understood in the local strong sense specified in
the theorem.  Thus $H$ is the exact normal symbol of $\mathcal H_h$, whereas
$\mathcal M_H$ is only its leading commutative approximation.
\end{remark}

\subsection{Representation--symbol realization of the Hamiltonian action}
\label{subsec:intertwining}

The representation and normal-symbol constructions can now be combined at
the level of Hamiltonian dynamics.  The covariant $q$-derivatives intertwine
with the represented Jackson operators, while ordered multiplication in the
quantum-plane algebra is transported to the normal star product.  These two
facts are precisely the ingredients needed for an exact symbol realization of
the formal Hamiltonian action.

\begin{theorem}[Intertwining of covariant derivatives and Jackson operators]
\label{thm:derivative-intertwining}
Let $\bF\in\mathfrak A_q$ and let $F=\sigma_N(\bF)$.
Then
\begin{align*}
\sigma_N(\qdx\bF)=\Jx F,\qquad
\sigma_N(\qdp\bF)=\Dp F.
\end{align*}
Equivalently,
\begin{align*}
\Pi(\qdx\bF)=\Op_h(\Jx F),\qquad
\Pi(\qdp\bF)=\Op_h(\Dp F).
\end{align*}
\end{theorem}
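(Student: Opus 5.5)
By linearity of $\sigma_N$, $\qdx$, $\qdp$, $\Jx$, $\Dp$, it suffices to verify the two symbol identities on the PBW basis $\bF=\mathbf x^m\mathbf p^n$, whose normal symbol is $F=x^mp^n$ by Theorem~\ref{thm:normal-ordering-faithfulness} and the definition of $\mathcal N_q$. The equivalent operator form will then follow by applying $\Op_h=\Pi\circ\mathcal N_q$ to both sides, using $\mathcal N_q\circ\sigma_N=\Id_{\mathfrak A_q}$. One small point needs attention here: $\Jx F$ and $\Dp F$ are a priori elements of $\mathcal X$. The monomial computation will show that they lie in $\iota(\mathcal P)$, so $\Op_h$ can be applied to them. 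We identify $\mathcal P$ with $\iota(\mathcal P)$ as agreed.

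For the $x$-derivative, the pointwise formula for $\Jx$ gives
\begin{align*}
\Jx(x^mp^n)=\frac{(q^{2m}-1)x^mp^n}{(q^2-1)x}=[m]_{q^2}x^{m-1}p^n
\end{align*}
for $m\ge1$. For $m=0$ it gives $0$, since $x^0p^n$ is invariant under $x$-dilation. This is exactly $\sigma_N$ applied to \eqref{eq:induced-q-derivatives}.

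For the $p$-derivative, we use $\Dp=\Jp\Ux(h)$. First $\Ux(h)(x^mp^n)=q^mx^mp^n$. Then $\Jp$ acts on the $p$-factor by the same one-variable computation, producing $[n]_{q^2}p^{n-1}$ for $n\ge1$ and $0$ for $n=0$. Hence
\begin{align*}
\Dp(x^mp^n)=q^m[n]_{q^2}x^mp^{n-1},
\end{align*}
which matches $\sigma_N(\qdp\mathbf x^m\mathbf p^n)$ in \eqref{eq:induced-q-derivatives}.

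No step is a real obstacle. The point most in need of care is the bookkeeping of the dilation factor in $\Dp$: the prefactor $q^m$ in the definition of $\qdp$ must come exactly from the preliminary $\Ux(h)$. This works because $\Ux(h)$ commutes with $\Mx^{-1}$-free operators in $p$ and acts diagonally on monomials, with eigenvalue $q^m$ on $x^mp^n$.
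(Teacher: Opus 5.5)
Your proposal is correct and follows the same route as the paper: check both symbol identities on the PBW basis $\mathbf x^m\mathbf p^n$ by computing $\Jx(x^mp^n)$ and $\Dp(x^mp^n)=\Jp\Ux(h)(x^mp^n)$ directly, extend by linearity, and then apply $\Op_h=\Pi\circ\mathcal N_q$ to get the operator form. You also explicitly note that $\Jx F$ and $\Dp F$ lie in $\iota(\mathcal P)$, so $\Op_h$ can be applied to them; the paper leaves this implicit.
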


\begin{proof}
It suffices to verify the identities on the PBW basis. If $m\ge1$, then
Definition~\ref{def:induced-q-derivatives} and the pointwise formula for
$\Jx$ give
\begin{align*}
\sigma_N
\left(
\qdx(\bfx^m\bfp^n)
\right)=[m]_{q^2}x^{m-1}p^n=\Jx(x^mp^n).
\end{align*}
When $m=0$, both sides vanish. Similarly, if $n\ge1$, then
\begin{align*}
\sigma_N
\left(
\qdp(\bfx^m\bfp^n)
\right)=
q^m[n]_{q^2}x^mp^{n-1}=
\Dp(x^mp^n),
\end{align*}
and both sides vanish when $n=0$. Linearity proves the two symbol identities. Applying $\Op_h$ and using
$\Op_h=\Pi\circ\mathcal N_q$ gives the operator identities.
\end{proof}

The theorem gives the commutative diagrams
$$
\begin{array}{ccc}
\mathfrak A_q
&
\overset{\qdx}{\longrightarrow}
&
\mathfrak A_q
\\[2mm]
\downarrow\scriptstyle{\sigma_N}
&&
\downarrow\scriptstyle{\sigma_N}
\\[2mm]
\cP
&
\overset{\Jx}{\longrightarrow}
&
\cP
\end{array}
\qquad
\begin{array}{ccc}
\mathfrak A_q
&
\overset{\qdp}{\longrightarrow}
&
\mathfrak A_q
\\[2mm]
\downarrow\scriptstyle{\sigma_N}
&&
\downarrow\scriptstyle{\sigma_N}
\\[2mm]
\cP
&
\overset{\Dp}{\longrightarrow}
&
\cP.
\end{array}
$$
Thus the Jackson operators are not merely analogous to the formal
$q$-derivatives: they are their exact normal-symbol realizations.

\begin{remark}
\label{rem:operator-derivative-distinction}
The identity
$
\Pi(\qdx\bF)=\Op_h(\Jx F)
$
must not be confused with the operator composition
$\Jx\Pi(\bF)$.
The latter also contains terms in which the derivative operator remains on
the right, as prescribed by the differential relations.  Applying both compositions to the explicitly defined constant function
$\oneX\in\cX$ removes those residual derivative terms:
\begin{align*}
\Jx\Pi(\bF)\oneX=\Jx F,\qquad
\Dp\Pi(\bF)\oneX=\Dp F.
\end{align*}
\end{remark}

Let $\bH, \bF\in\mathfrak A_q$.  We fix the ordering convention inherited
from the covariant calculus and define the formal Hamiltonian action
\begin{align}
\mathbf X_{\bH}[\bF]:=
q^{-1/2}(\qdp\bH)(\qdx\bF)-
q^{1/2}(\qdx\bH)(\qdp\bF).
\label{eq:formal-Hamiltonian-action}
\end{align}
The order of the factors in \eqref{eq:formal-Hamiltonian-action} is part of
the definition.  We use $\mathbf X_{\bH}$ as an ordered Hamiltonian action; we
do not assume that it is an ordinary derivation or a Poisson bracket on the
whole noncommutative algebra.  Its coordinate values reproduce the formal
$q$-Hamiltonian equations, while its behavior on general observables is
analyzed after passing to symbols.

Let
$H=\sigma_N(\bH)$,
$F=\sigma_N(\bF)$.
Define the star-Jackson Hamiltonian action by
\begin{align*}
\mathfrak X_{h,\star}^{H}F:=
q^{-1/2}
(\Dp H)\starh(\Jx F)-
q^{1/2}(\Jx H)\starh(\Dp F).
\end{align*}

The following theorem is the central representation--symbol statement.

\begin{theorem}[Representation--symbol realization of the Hamiltonian action]
\label{thm:exact-Hamiltonian-descent}
For every $\bH,\bF\in\mathfrak A_q$,
\begin{align}
\sigma_N\left(\mathbf X_{\bH}[\bF]\right)=\mathfrak X_{h,\star}^{H}F.
\label{eq:exact-action-descent}
\end{align}
Equivalently,
$\Pi\left(\mathbf X_{\bH}[\bF]\right)=
\Op_h\left(\mathfrak X_{h,\star}^{H}F\right)$.
\end{theorem}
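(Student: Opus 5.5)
The plan is to apply the linear map $\sigma_N$ directly to the definition \eqref{eq:formal-Hamiltonian-action} and reduce each ordered product to a star product of symbols. First I record the multiplicativity property that follows from Definition~\ref{def:star-product}: for any $\mathbf A,\mathbf B\in\mathfrak A_q$,
\begin{align*}
\sigma_N(\mathbf A\mathbf B)=\sigma_N(\mathbf A)\starh\sigma_N(\mathbf B).
\end{align*}
This holds because $\mathcal N_q$ and $\sigma_N$ are mutually inverse vector-space isomorphisms. Writing $\mathbf A=\mathcal N_q(\sigma_N\mathbf A)$ and $\mathbf B=\mathcal N_q(\sigma_N\mathbf B)$, the definition of $\starh$ gives the identity at once. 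Thus $\sigma_N$ is an algebra isomorphism from $\mathfrak A_q$ onto $(\mathcal P,\starh)$, and this is the only place where the noncommutative product enters.

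Next, by linearity of $\sigma_N$ and the fact that $q^{\pm1/2}$ are scalars,
\begin{align*}
\sigma_N\left(\mathbf X_{\bH}[\bF]\right)
=q^{-1/2}\sigma_N\bigl((\qdp\bH)(\qdx\bF)\bigr)-q^{1/2}\sigma_N\bigl((\qdx\bH)(\qdp\bF)\bigr).
\end{align*}
I then apply the multiplicativity identity to each term, keeping the factors in the order fixed by \eqref{eq:formal-Hamiltonian-action}. After that, Theorem~\ref{thm:derivative-intertwining} replaces $\sigma_N(\qdp\bH)$ by $\Dp H$, $\sigma_N(\qdx\bF)$ by $\Jx F$, $\sigma_N(\qdx\bH)$ by $\Jx H$, and $\sigma_N(\qdp\bF)$ by $\Dp F$. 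The result is exactly $\Xstar F$, which proves \eqref{eq:exact-action-descent}. One point needs checking here: $\Jx$ and $\Dp$ map $\mathcal P$ into $\mathcal P$, so all the star products are star products of polynomials. This is visible from the monomial formulas in the proof of Theorem~\ref{thm:derivative-intertwining}.

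For the operator form, I apply $\Op_h$ to both sides of \eqref{eq:exact-action-descent}. Since $\Op_h\circ\sigma_N=\Pi$ on $\mathfrak A_q$, which holds because $\Op_h=\Pi\circ\mathcal N_q$ and $\mathcal N_q\circ\sigma_N=\Id$, the left side becomes $\Pi(\mathbf X_{\bH}[\bF])$. Conversely, the symbol identity can be recovered from the operator identity by injectivity of $\Op_h$ (Proposition~\ref{prop:explicit-quantization}), so the two statements are equivalent.

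I do not expect a real obstacle. The only step needing care is the multiplicativity of $\sigma_N$ with the factor order preserved, since $\starh$ is not commutative; swapping the order would introduce powers of $q$ through \eqref{eq:star-monomial}. The remaining steps are bookkeeping with earlier results.
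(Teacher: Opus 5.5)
Your proposal is correct and follows essentially the same route as the paper: apply $\sigma_N$ to \eqref{eq:formal-Hamiltonian-action} using linearity, then the order-preserving identity $\sigma_N(\mathbf A\mathbf B)=\sigma_N(\mathbf A)\starh\sigma_N(\mathbf B)$, which is immediate from the definition of $\starh$, and then Theorem~\ref{thm:derivative-intertwining} for the four derivative factors. The operator form follows by applying $\Op_h$, using $\Op_h\circ\sigma_N=\Pi$, and your remark that injectivity of $\Op_h$ gives the converse direction of the equivalence is a small addition the paper leaves implicit.
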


\begin{proof}
By Theorem~\ref{thm:derivative-intertwining},
the symbols of the four derivative factors are
$\Dp H$, $\Jx H$, $\Jx F$, and $\Dp F$.  By the definition of the star
product, the symbol of a product in $\mathfrak A_q$ is the star product of
the two symbols.  Applying this observation to the two ordered products in
\eqref{eq:formal-Hamiltonian-action} proves
\eqref{eq:exact-action-descent}.  The operator identity follows by applying
$\Op_h$.
\end{proof}

Theorem~\ref{thm:exact-Hamiltonian-descent} identifies the symbol-level
Hamiltonian dynamics without a commutative approximation: the ordered
noncommutative product becomes the star product determined by the dilation
representation.  Ordinary pointwise multiplication enters only at the next,
commutativized level.

The coordinate observables are exceptional because their Jackson derivatives
are constants.

\begin{corollary}[Coordinate equations as exact symbol images]
\label{cor:coordinate-descent}
For every Hamiltonian symbol $H\in\cP$,
\begin{align*}
\mathfrak X_{h,\star}^{H}x=q^{-1/2}\Dp H,\qquad
\mathfrak X_{h,\star}^{H}p=-q^{1/2}\Jx H.
\end{align*}
Hence the formal coordinate equations
\begin{align*}
\dot{\bfx}=q^{-1/2}\qdp\bH,\qquad
\dot{\bfp}=-q^{1/2}\qdx\bH
\end{align*}
descend exactly to
\begin{align*}
\dot x=q^{-1/2}\Dp H,\qquad
\dot p=-q^{1/2}\Jx H.
\end{align*}
\end{corollary}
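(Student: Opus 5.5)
The plan is to compute the four Jackson derivatives of the coordinate symbols directly, substitute them into the definition of $\Xstar$, and then invoke Theorem~\ref{thm:exact-Hamiltonian-descent} to lift the resulting identities back to the formal level.

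\emph{Jackson derivatives of the coordinates.} From the pointwise formulas for $\Jx$ and $\Dp$, or equivalently from Definition~\ref{def:induced-q-derivatives} combined with Theorem~\ref{thm:derivative-intertwining}, I would verify four identities. For $x$, with $m=1,n=0$,
\begin{align*}
\Jx x=[1]_{q^2}=1_{\mathcal P},\qquad \Dp x=0 .
\end{align*}
For $p$, with $m=0,n=1$,
\begin{align*}
\Jx p=0,\qquad \Dp p=q^0[1]_{q^2}=1_{\mathcal P}.
\end{align*}
As a direct check, $\Dp p=\frac{q^2p-p}{(q^2-1)p}=1$.

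\emph{Substitution into $\Xstar$.} Next I substitute these constants into the definition of $\Xstar$. For $F=x$ this gives
\[
\Xstar x=q^{-1/2}(\Dp H)\starh 1_{\mathcal P}-q^{1/2}(\Jx H)\starh 0 .
\]
By Theorem~\ref{thm:star-product}, $1_{\mathcal P}$ is the two-sided unit and $\starh$ is bilinear. Hence $\Xstar x=q^{-1/2}\Dp H$. The computation for $F=p$ is the same and gives $\Xstar p=-q^{1/2}\Jx H$.

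\emph{Lifting to the formal equations.} I would interpret the formal equations as $\dot{\bfx}:=\mathbf X_{\bH}[\bfx]$ and $\dot{\bfp}:=\mathbf X_{\bH}[\bfp]$. At the algebraic level, $\qdx\bfx=1_{\mathfrak A_q}$, $\qdp\bfx=0$, $\qdx\bfp=0$ and $\qdp\bfp=1_{\mathfrak A_q}$. Substituting these into \eqref{eq:formal-Hamiltonian-action} gives $\mathbf X_{\bH}[\bfx]=q^{-1/2}\qdp\bH$ and $\mathbf X_{\bH}[\bfp]=-q^{1/2}\qdx\bH$. Applying $\sigma_N$ and using Theorem~\ref{thm:exact-Hamiltonian-descent}, with $\sigma_N(\bfx)=x$ and $\sigma_N(\bfp)=p$, the symbols of these right-hand sides are exactly $\Xstar x$ and $\Xstar p$. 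By the first part these equal $q^{-1/2}\Dp H$ and $-q^{1/2}\Jx H$. One can also reach the same conclusion directly from Theorem~\ref{thm:derivative-intertwining}.

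There is no genuine obstacle here. The only point needing care is to state precisely that ``descend exactly'' means equality of normal symbols under $\sigma_N$, with no $h$-expansion involved, because the star factors are multiplied only by the unit.
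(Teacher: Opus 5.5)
Your proposal is correct and follows the paper's proof: you compute the four constant Jackson derivatives $\Jx x=1$, $\Dp x=0$, $\Jx p=0$, $\Dp p=1$, substitute them into $\Xstar$, and use that $1_{\mathcal P}$ is the unit for $\starh$. Your explicit lift, which sets $\dot{\bfx}=\mathbf X_{\bH}[\bfx]$ and $\dot{\bfp}=\mathbf X_{\bH}[\bfp]$ and then invokes Theorem~\ref{thm:exact-Hamiltonian-descent}, simply spells out the ``descend exactly'' step that the paper leaves implicit.
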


\begin{proof}
The identities
\begin{align*}
&\Jx x=1_{\cP}, &&\Dp x=0,\\
&\Jx p=0,&& \Dp p=1_{\cP}
\end{align*}
give
$\mathfrak X_{h,\star}^{H}x=
q^{-1/2}(\Dp H)\starh 1_{\cP}$,
$\mathfrak X_{h,\star}^{H}p=-q^{1/2}(\Jx H)\starh 1_{\cP}$.
Since $G\starh 1_{\cP}=G$, the conclusion follows.
\end{proof}

The coordinate equations therefore form an exact sector of the symbol
realization.  Their Jackson form is not obtained by taking a small-$h$ limit:
it follows directly from the unit property of the normal star product and the
fact that the derivatives of the coordinate observables are constants.

\begin{remark}
\label{rem:exact-vs-approximate}
The coordinate representation $\Pi$, the normal-ordering isomorphism
$\mathcal N_q$, and the induced star product are exact. Likewise,
Theorem~\ref{thm:exact-Hamiltonian-descent} and
Corollary~\ref{cor:coordinate-descent} are exact statements. Approximation
enters only when the star product is replaced by ordinary multiplication,
when the operator Hamiltonian is replaced by $\mathcal M_H$, or when the
limit $h\to0$ is taken. The coordinate equations are exceptional because one
of the star factors is the polynomial unit $1_{\cP}$, viewed in $\cX$ through $\iota$.
\end{remark}

\section{Pointwise Jackson realizations and geometric defects}
\label{sec:euclidean}

We now turn to the first approximation identified above: replacing the exact
star product by ordinary pointwise multiplication.  This produces Jackson
actions on the commutative function algebra that agree with the exact
star-symbol dynamics on the coordinate observables but differ on nonlinear
ones.  We first compare these actions and their product rules, and then study
ordinary divergence and energy balance for their common coordinate vector
field.  The latter are properties of the commutative realization and should
not be confused with covariance properties of the underlying noncommutative
differential calculus.

\subsection{Pointwise actions and comparison with the star-symbol dynamics}
\label{subsec:euclidean-actions}

There are two natural pointwise actions to compare with the exact
star-symbol action.  The first retains the covariant Jackson derivatives but
uses ordinary multiplication.  The second is an asymmetric pointwise action in which the observable in the
$p$ direction is differentiated by the uncoupled operator $\Jp$ rather than
$\Dp$.  This choice appears in the $q$-Hamiltonian Monte Carlo formulation
\cite{YangDeng2025}, but here it is used only as a comparison of pointwise
calculi.  Both actions reproduce the same coordinate vector field while
defining different actions on nonlinear observables.

For $H,F\in\mathcal X$, define the covariant and asymmetric pointwise
Jackson actions by
\begin{align*}
&\mathfrak X_{h,\mathrm{cov}}^HF:=
q^{-1/2}(\mathcal D_pH)(\mathcal J_xF)
-q^{1/2}(\mathcal J_xH)(\mathcal D_pF),\\
&\mathfrak B_h^HF:=
q^{-1/2}(\mathcal D_pH)(\mathcal J_xF)
-q^{1/2}(\mathcal J_xH)(\mathcal J_pF).
\end{align*}
Set also
\begin{align*}
A_H
&:=q^{-1/2}\mathcal D_pH,
&
B_H
&:=-q^{1/2}\mathcal J_xH,\\
V_h[H]
&:=(A_H,B_H),
&
\mathcal V_h^HF
&:=A_H\partial_xF+B_H\partial_pF.
\end{align*}

\begin{theorem}[Comparison of the exact and pointwise actions]
\label{thm:comparison-actions}
For polynomial $H$ and $F$, the exact star-symbol action, the covariant
pointwise action, the asymmetric pointwise action, and the ordinary
directional action satisfy
\begin{align*}
\mathfrak X_{h,\star}^HF-
\mathfrak X_{h,\mathrm{cov}}^HF
&=
q^{-1/2}
\left[
(\mathcal D_pH)\star_h(\mathcal J_xF)
-(\mathcal D_pH)(\mathcal J_xF)
\right]
\nonumber\\
&-
q^{1/2}
\left[
(\mathcal J_xH)\star_h(\mathcal D_pF)
-(\mathcal J_xH)(\mathcal D_pF)
\right],\\
\mathfrak B_h^HF-
\mathfrak X_{h,\mathrm{cov}}^HF
&=
-q^{1/2}(\mathcal J_xH)(\mathcal J_p-\mathcal D_p)F,\\
\mathfrak X_{h,\mathrm{cov}}^HF-
\mathcal V_h^HF
&=
A_H(\mathcal J_x-\partial_x)F
+B_H(\mathcal D_p-\partial_p)F,\\
\mathfrak B_h^HF-
\mathcal V_h^HF
&=
A_H(\mathcal J_x-\partial_x)F
+B_H(\mathcal J_p-\partial_p)F.
\end{align*}
Locally on compact subsets,
\begin{align*}
\mathfrak X_{h,\star}^HF-
\mathfrak X_{h,\mathrm{cov}}^HF
&=O(h),\\
\mathfrak B_h^HF-
\mathfrak X_{h,\mathrm{cov}}^HF
&=O(h)
\end{align*}
as $h\to0$. All four actions agree exactly on the coordinate observables:
\begin{align*}
\mathfrak X_{h,\star}^Hx
=
\mathfrak X_{h,\mathrm{cov}}^Hx
=
\mathfrak B_h^Hx
=
\mathcal V_h^Hx
&=A_H,\\
\mathfrak X_{h,\star}^Hp
=
\mathfrak X_{h,\mathrm{cov}}^Hp
=
\mathfrak B_h^Hp
=
\mathcal V_h^Hp
&=B_H.
\end{align*}
\end{theorem}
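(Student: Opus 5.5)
The four difference identities follow directly from the definitions, so I will prove them first and purely algebraically. For the first, subtract the definition of $\mathfrak X_{h,\mathrm{cov}}^HF$ from that of $\mathfrak X_{h,\star}^HF$ and group the $q^{-1/2}$ and $q^{1/2}$ terms. For the second, the $\mathcal D_pH\,\mathcal J_xF$ terms cancel and only $-q^{1/2}(\mathcal J_xH)(\mathcal J_pF-\mathcal D_pF)$ survives. For the third and fourth, substitute $A_H=q^{-1/2}\mathcal D_pH$ and $B_H=-q^{1/2}\mathcal J_xH$ into the pointwise actions, so that $\mathfrak X_{h,\mathrm{cov}}^HF=A_H\mathcal J_xF+B_H\mathcal D_pF$ and $\mathfrak B_h^HF=A_H\mathcal J_xF+B_H\mathcal J_pF$. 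Then subtract $\mathcal V_h^HF=A_H\partial_xF+B_H\partial_pF$. None of this uses polynomiality except to make sense of $\star_h$ in the first identity.

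For the asymptotic statements I will work on a fixed compact $K\Subset\mathbb R_+^2$ with $|h|\le h_0$. For polynomials $G_1,G_2$, Theorem~\ref{thm:star-product} gives the exact entire series
\begin{align*}
G_1\star_hG_2-G_1G_2=\sum_{k\ge1}\frac{h^k}{k!}(\mathcal A_p^kG_1)(\mathcal A_x^kG_2).
\end{align*}
Here $G_1,G_2$ are $\mathcal D_pH,\mathcal J_xF$, etc., and these depend on $h$. The main obstacle is to control them uniformly. By Definition~\ref{def:induced-q-derivatives} and Theorem~\ref{thm:derivative-intertwining}, their coefficients are $[m]_{q^2}$ and $q^m[n]_{q^2}$ times the coefficients of $H,F$. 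These are continuous in $h$, including at $h=0$ where $[m]_{q^2}\to m$, and the degrees of the polynomials stay bounded. So on the monomial level $(x^ap^b)\star_h(x^rp^s)-x^{a+r}p^{b+s}=(e^{hbr}-1)x^{a+r}p^{b+s}$, which is bounded by $C|h|$ on $K$ with $C$ depending only on the bounded degrees and on $h_0$. Summing finitely many terms with uniformly bounded coefficients gives $\mathfrak X_{h,\star}^HF-\mathfrak X_{h,\mathrm{cov}}^HF=O(h)$ uniformly on $K$.

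For the second estimate I use $(\mathcal J_p-\mathcal D_p)F=\mathcal J_p(\Id-\mathcal U_x(h))F$. In pointwise form this is $\bigl[F(x,q^2p)-F(qx,q^2p)-F(x,p)+F(qx,p)\bigr]/((q^2-1)p)$. This is a mixed second difference divided by $(q^2-1)$, so by the mean value theorem on a slightly enlarged compact set (as with $K_{F,h_0}$ in Theorem~\ref{thm:operator-polynomial-expansion}) it is bounded by $C|h|\,\|\partial_x\partial_pF\|$ up to factors of $x,p$. Alternatively, on monomials it is $[n]_{q^2}(1-q^m)x^mp^{n-1}=O(h)$. The factor $\mathcal J_xH$ is locally bounded uniformly in $h$, which gives the $O(h)$ bound.

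Finally, the coordinate identities. Corollary~\ref{cor:coordinate-descent} already gives $\mathfrak X_{h,\star}^Hx=A_H$ and $\mathfrak X_{h,\star}^Hp=B_H$. For the pointwise actions I use $\mathcal J_xx=1$, $\mathcal D_px=0$, $\mathcal J_px=0$, $\mathcal J_xp=0$, $\mathcal D_pp=\mathcal J_pp=1$, together with $\partial_xx=\partial_pp=1$ and $\partial_px=\partial_xp=0$. Substituting these into each definition gives $A_H$ for $F=x$ and $B_H$ for $F=p$.
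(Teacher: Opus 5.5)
Your proposal is correct and follows the paper's proof. The exact identities come from subtracting the definitions. The first estimate comes from the star-product series applied to Jackson derivatives whose coefficients stay bounded as $h\to0$; you make this uniformity explicit through the factor $e^{hbr}-1$ on monomials, which the paper only asserts. The coordinate identities come from $\mathcal J_xx=\mathcal D_pp=\mathcal J_pp=1$, $\mathcal D_px=\mathcal J_px=\mathcal J_xp=0$, together with the unit property of $\star_h$.

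The one minor difference is in the second estimate. You bound the mixed second difference $\mathcal J_p(\Id_{\mathcal X}-\mathcal U_x(h))F$ by the mean value theorem, or on monomials by $(1-q^m)[n]_{q^2}$. The paper instead expands $\mathcal U_x(h)=\Id_{\mathcal X}+h\mathcal A_x+O(h^2)$, which also identifies the leading term $-h\mathcal J_p\mathcal A_xF$.
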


\begin{proof}
The exact identities follow by subtracting the definitions. For fixed polynomial $H$ and $F$, the Jackson derivatives in the first
difference formula remain in finite-dimensional
polynomial spaces and have coefficients locally bounded as $h\to0$.
Therefore Proposition~\ref{prop:star-expansion} gives the stated
star-product estimate. Since
$\mathcal D_p=\mathcal J_p\mathcal U_x(h)$ and
$\mathcal U_x(h)=\Id_{\mathcal X}+h\mathcal A_x+O(h^2)$,
one has
$$
(\mathcal J_p-\mathcal D_p)F
=-h\mathcal J_p\mathcal A_xF+O(h^2),
$$
which gives the second estimate. Finally,
$$
\mathcal J_xx=1,
\qquad
\mathcal D_px=\mathcal J_px=0,
\qquad
\mathcal J_xp=0,
\qquad
\mathcal D_pp=\mathcal J_pp=1,
$$
and star multiplication by $1$ agrees with ordinary multiplication.
\end{proof}

Thus the exact symbol dynamics and all three pointwise realizations produce the
same coordinate vector field. Their differences concern only the calculus of
nonlinear observables. In particular, the asymmetric action already determines the same coordinate field, while
the covariant star action retains the full normal-symbol image of the formal
noncommutative action.

\subsection{Twisted product rules and derivation defects}
\label{subsec:Leibniz-defects}

We next record the product structures of the two pointwise actions. The
one-directional Jackson rules below are $\sigma$-derivation identities, in the
sense that the ordinary Leibniz rule is modified by an algebra endomorphism;
for the general algebraic theory of $\sigma$-derivations and the associated
deformed Lie structures, see \cite{HartwigLarssonSilvestrov2006}. The coupled
operator $\Dp$ involves two commuting shifts and therefore has a corresponding
two-shift product rule. Define the shift operators in
$\mathcal L(\mathcal X)$ by
\begin{align*}
\mathcal S_x:=\Ux(2h),\qquad
\mathcal T_x:=\Ux(h),\qquad
\mathcal S_p:=\Up(2h).
\end{align*}

\begin{proposition}
\label{prop:shifted-product-rules}
For smooth $F,G$,
\begin{align*}
\Jx(FG)
&=
(\Jx F)G
+
(\mathcal S_xF)(\Jx G),\\
\Jp(FG)
&=
(\Jp F)G
+
(\mathcal S_pF)(\Jp G),\\
\Dp(FG)
&=
(\Dp F)(\mathcal T_xG)
+
(\mathcal T_x\mathcal S_pF)(\Dp G).
\end{align*}
\end{proposition}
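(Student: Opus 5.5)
The plan is to verify the three identities pointwise at an arbitrary $(x,p)\in\mathbb R_+^2$. Two facts do the work: every dilation is an algebra endomorphism of $\mathcal X$, that is, $\mathcal U_x(t)(FG)=(\mathcal U_x(t)F)(\mathcal U_x(t)G)$ and likewise for $\mathcal U_p(t)$; and the inverse multiplication operators $\mathcal M_x^{-1}$, $\mathcal M_p^{-1}$ are legitimate on the positive quadrant. No limiting or topological argument is needed, since each identity is a finite algebraic manipulation of difference quotients.

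For $\Jx$, write the numerator with the add-and-subtract trick:
\begin{align*}
F(q^2x,p)G(q^2x,p)-F(x,p)G(x,p)
&=\bigl[F(q^2x,p)-F(x,p)\bigr]G(x,p)\\
&\quad+F(q^2x,p)\bigl[G(q^2x,p)-G(x,p)\bigr].
\end{align*}
Dividing by $(q^2-1)x$ gives $(\Jx F)G+(\mathcal S_xF)(\Jx G)$, because $(\mathcal S_xF)(x,p)=F(q^2x,p)$. The identity for $\Jp$ is the same computation in the $p$ variable, with $\mathcal S_p=\Up(2h)$.

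For $\Dp=\Jp\mathcal T_x$, first use multiplicativity of $\mathcal T_x$ to get $\mathcal T_x(FG)=(\mathcal T_xF)(\mathcal T_xG)$. Then apply the $\Jp$ rule just proved to the product of $\mathcal T_xF$ and $\mathcal T_xG$, with the factors ordered so that the unshifted factor is $\mathcal T_xG$:
\begin{align*}
\Dp(FG)=\Jp\bigl((\mathcal T_xF)(\mathcal T_xG)\bigr)
=(\Jp\mathcal T_xF)(\mathcal T_xG)+(\mathcal S_p\mathcal T_xF)(\Jp\mathcal T_xG).
\end{align*}
Identify $\Jp\mathcal T_x=\Dp$ in both places. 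Then use commutativity of $\mathcal U_x$ and $\mathcal U_p$ (Proposition~\ref{prop:dilation-groups}) to rewrite $\mathcal S_p\mathcal T_x=\mathcal T_x\mathcal S_p$. This yields $(\Dp F)(\mathcal T_xG)+(\mathcal T_x\mathcal S_pF)(\Dp G)$, as claimed.

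The only point needing care is the bookkeeping of which factor carries which shift in the third rule. Choosing the opposite split in the $\Jp$ rule would give an equally valid formula with the shifts distributed differently. So I will explicitly apply the $\Jp$ rule in its stated form, with the first argument equal to $\mathcal T_xF$, and check the result against the pointwise formula $(\Dp F)(x,p)=\bigl(F(qx,q^2p)-F(qx,p)\bigr)/\bigl((q^2-1)p\bigr)$.
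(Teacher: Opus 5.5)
Your proposal is correct and matches the paper's proof. The paper also proves the two one-variable rules by adding and subtracting the single-shift cross term in the numerator. It obtains the coupled rule by applying the $\Jp$ rule to $\mathcal T_x(FG)=(\mathcal T_xF)(\mathcal T_xG)$ and then commuting the dilations, with the same choice of factor order as yours.
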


\begin{proof}
The first two identities follow by adding and subtracting the corresponding
single-shift cross term in the finite-difference numerator.  For the third,
apply the $\Jp$ product rule to
$$
\mathcal T_x(FG)
=
(\mathcal T_xF)(\mathcal T_xG)
$$
and use the commutation of the two dilation groups.
\end{proof}

\begin{theorem}[Derivation defects of the two pointwise actions]
\label{thm:Leibniz-defects}
Define
\begin{align*}
&\cE_{h,\mathrm{cov}}^H(F,G)
:=\Xjack(FG)-(\Xjack F)G-F(\Xjack G),\\
&\cE_{h,\mathrm{asym}}^H(F,G):=
\Bq(FG)-(\Bq F)G-F(\Bq G).
\end{align*}
Then
\begin{align*}
\cE_{h,\mathrm{cov}}^H(F,G)=&A_H
(\mathcal S_xF-F)\Jx G+B_H(\Dp F)(\mathcal T_xG-G)+\\
&B_H(\mathcal T_x\mathcal S_pF-F)\Dp G,
\end{align*}
whereas
\begin{align*}
\cE_{h,\mathrm{asym}}^H(F,G)=A_H(\mathcal S_xF-F)\Jx G+B_H(\mathcal S_pF-F)\Jp G.
\end{align*}
\end{theorem}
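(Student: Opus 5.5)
The plan is to substitute the shifted product rules of Proposition~\ref{prop:shifted-product-rules} directly into the definitions of $\Xjack$ and $\Bq$, written with the abbreviations $A_H=q^{-1/2}\Dp H$ and $B_H=-q^{1/2}\Jx H$. In this notation
\begin{align*}
\Xjack F=A_H\,\Jx F+B_H\,\Dp F,\qquad
\Bq F=A_H\,\Jx F+B_H\,\Jp F.
\end{align*}
Each action is therefore a sum of two terms of the form coefficient times a one-directional operator applied to $F$. Since $A_H,B_H$ depend only on $H$, the defect of each action splits into the sum of the defects of the individual operators $\Jx$, $\Dp$, $\Jp$, each multiplied by its coefficient.

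First I treat the $\Jx$ piece. Proposition~\ref{prop:shifted-product-rules} gives $\Jx(FG)=(\Jx F)G+(\mathcal S_xF)(\Jx G)$. Subtracting $(\Jx F)G+F(\Jx G)$ leaves $(\mathcal S_xF-F)\Jx G$, and multiplying by $A_H$ yields the first term common to both formulas. The $\Jp$ piece is identical: its defect is $(\mathcal S_pF-F)\Jp G$, which after multiplication by $B_H$ completes $\cE_{h,\mathrm{asym}}^H$.

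The only step needing care is the coupled operator $\Dp$. Here the rule $\Dp(FG)=(\Dp F)(\mathcal T_xG)+(\mathcal T_x\mathcal S_pF)(\Dp G)$ has shifts on both factors. I subtract $(\Dp F)G+F(\Dp G)$ and regroup by adding and subtracting nothing further:
\begin{align*}
\Dp(FG)-(\Dp F)G-F(\Dp G)=(\Dp F)(\mathcal T_xG-G)+(\mathcal T_x\mathcal S_pF-F)\Dp G.
\end{align*}
Multiplying by $B_H$ gives the remaining two terms of $\cE_{h,\mathrm{cov}}^H$.

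All identities are pointwise on $\R_+^2$, so they hold for arbitrary smooth $F,G\in\cX$ as well as for polynomials. I expect no real obstacle. The one point to check is that the correct ordering of shifts in the $\Dp$ product rule is used, namely that $\mathcal T_x$ lands on $G$ in the first term. This follows from the proof of Proposition~\ref{prop:shifted-product-rules} via $\mathcal T_x(FG)=(\mathcal T_xF)(\mathcal T_xG)$ and the commutation of the dilation groups.
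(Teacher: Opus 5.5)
Your proposal is correct and matches the paper's argument, which inserts the shifted product rules of Proposition~\ref{prop:shifted-product-rules} into the definitions and subtracts the ordinary Leibniz terms. You write out the same bookkeeping explicitly, including the placement of $\mathcal T_x$ on $G$ in the first term of the $\Dp$ rule.
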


\begin{proof}
Insert the corresponding shifted product rules and subtract the ordinary
Leibniz terms.
\end{proof}

The asymmetric pointwise action has a simpler two-shift defect because it uses the
uncoupled derivative $\Jp$ on the observable.  The covariant pointwise action
retains the full differential representation and therefore inherits the
additional preliminary $x$-dilation.

\begin{corollary}
\label{cor:kinetic-potential}
If $H=K(p)$, then $B_H=0$, and both pointwise actions satisfy
\begin{align*}
\Xjack(FG)
&=
(\Xjack F)G
+
(\mathcal S_xF)(\Xjack G),\\
\Bq(FG)
&=
(\Bq F)G
+
(\mathcal S_xF)(\Bq G).
\end{align*}
If $H=U(x)$, then the covariant pointwise action satisfies
\begin{align*}
\Xjack(FG)
&=
(\Xjack F)(\mathcal T_xG)
+
(\mathcal T_x\mathcal S_pF)(\Xjack G),
\end{align*}
whereas the asymmetric pointwise action satisfies
\begin{align*}
\Bq(FG)
&=
(\Bq F)G
+
(\mathcal S_pF)(\Bq G).
\end{align*}
\end{corollary}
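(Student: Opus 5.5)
The plan is to specialize the action formulas to the two separable Hamiltonians, observe that one of the coefficient functions $A_H$, $B_H$ vanishes identically, and then apply the one-directional shifted product rules of Proposition~\ref{prop:shifted-product-rules} directly. No asymptotic input is needed; every identity is exact and pointwise on $\mathbb R_+^2$.

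\emph{Kinetic case $H=K(p)$.} Since $\mathcal U_x(2h)$ fixes functions independent of $x$, the numerator in $\Jx H$ vanishes, so $\Jx H=0$ and hence $B_H=-q^{1/2}\Jx H=0$. Substituting into the definitions gives
\begin{align*}
\Xjack F=A_H\,\Jx F,\qquad \Bq F=A_H\,\Jx F,
\end{align*}
so the two pointwise actions coincide in this case. Multiplying the first identity of Proposition~\ref{prop:shifted-product-rules} by the function $A_H$ gives
\begin{align*}
A_H\Jx(FG)=(A_H\Jx F)\,G+(\mathcal S_xF)\,(A_H\Jx G),
\end{align*}
which is the asserted twisted Leibniz rule for both actions. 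As a consistency check, setting $B_H=0$ in Theorem~\ref{thm:Leibniz-defects} leaves only the defect $A_H(\mathcal S_xF-F)\Jx G$, which is exactly this rule rewritten.

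\emph{Potential case $H=U(x)$.} Here $\mathcal U_x(h)U$ is again a function of $x$ alone, so $\Jp$ annihilates it. Hence $\Dp H=\Jp\mathcal U_x(h)H=0$ and $A_H=0$. The actions reduce to
\begin{align*}
\Xjack F=B_H\,\Dp F,\qquad \Bq F=B_H\,\Jp F.
\end{align*}
Multiplying the third identity of Proposition~\ref{prop:shifted-product-rules} by $B_H$ yields
\begin{align*}
\Xjack(FG)=(\Xjack F)(\mathcal T_xG)+(\mathcal T_x\mathcal S_pF)(\Xjack G).
\end{align*}
Multiplying the second identity of the same proposition by $B_H$ yields
\begin{align*}
\Bq(FG)=(\Bq F)G+(\mathcal S_pF)(\Bq G).
\end{align*}
Both steps use only that multiplication by the function $B_H$ commutes with pointwise multiplication.

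The only point requiring care is the vanishing $\Dp U=0$ in the potential case. The preliminary $x$-dilation in $\Dp=\Jp\mathcal U_x(h)$ must not destroy $x$-only dependence, and it does not, because dilations act coordinatewise. Everything else is direct substitution.
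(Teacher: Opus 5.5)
Your proposal is correct and follows the paper's own argument: in each case one coefficient vanishes ($B_H=0$ because $\Jx K=0$, or $A_H=0$ because $\Dp U=\Jp\mathcal U_x(h)U=0$). Each identity then follows by multiplying the appropriate shifted product rule of Proposition~\ref{prop:shifted-product-rules} by the surviving coefficient, which is exactly the paper's ``only one Jackson component survives'' proof written out in detail.
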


\begin{proof}
Only one Jackson component survives in each case, so the conclusions follow
from Proposition~\ref{prop:shifted-product-rules}.
\end{proof}

The preceding results concern the algebraic behavior of the pointwise
actions on observables.  We next turn to the ordinary geometric properties of
the common coordinate vector field.

\subsection{Divergence, energy balance, and model Hamiltonians}
\label{subsec:flow}

Throughout this subsection the Hamiltonian $H$ is assumed real-valued whenever
$V_h[H]$ is interpreted as a flow on $\mathbb R_+^2$.  The preceding algebraic
identities themselves remain valid complex-linearly.

The exact algebraic descent determines the coordinate velocities, but
ordinary divergence and the variation of the classical symbol $H$ along the
resulting flow are properties of the commutative vector field $\Vfield$.
They should not be identified with a formal $q$-symplectic invariance of the
noncommutative differential calculus.

The divergence of $\Vfield$ is
\begin{align*}
\diver\Vfield=
q^{-1/2}\partial_x(\Dp H)-
q^{1/2}\partial_p(\Jx H).
\end{align*}

\begin{proposition}
\label{prop:divergence}
For every sufficiently smooth $H$,
\begin{align}
\diver\Vfield=q^{1/2}\left[\Dp(\partial_xH)-\Jx(\partial_pH)\right].
\label{eq:divergence-exact}
\end{align}
\end{proposition}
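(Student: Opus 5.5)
The plan is to prove \eqref{eq:divergence-exact} by commuting the ordinary partial derivatives in the definition of $\diver\Vfield$ past the Jackson operators. This rests on two operator identities on $\cX$:
\begin{align*}
\partial_x\Dp=q\,\Dp\partial_x,\qquad \partial_p\Jx=\Jx\partial_p .
\end{align*}
Granting these, the displayed divergence formula becomes
\begin{align*}
\diver\Vfield=q^{-1/2}\cdot q\,\Dp(\partial_xH)-q^{1/2}\Jx(\partial_pH)
=q^{1/2}\bigl[\Dp(\partial_xH)-\Jx(\partial_pH)\bigr],
\end{align*}
which is the claim. Smoothness of $H$ (class $C^1$ suffices) guarantees that all terms are defined pointwise on $\R_+^2$.

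For the second identity, note that $\Jx=\frac{1}{q^2-1}\mathcal M_{1/x}\bigl(\Ux(2h)-\Id_{\cX}\bigr)$ involves only multiplication by a function of $x$ and dilation in $x$. Both commute with $\partial_p$, so $\partial_p\Jx=\Jx\partial_p$ follows directly from the pointwise formula for $\Jx$.

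For the first identity, write $\Dp=\Jp\Ux(h)$ as in Section~\ref{subsec:differential}. The factor $\Jp=\frac{1}{q^2-1}\mathcal M_{1/p}\bigl(\Up(2h)-\Id_{\cX}\bigr)$ involves only $p$-multiplication and $p$-dilation, so it commutes with $\partial_x$. By the chain rule, $\partial_x\bigl(F(e^hx,p)\bigr)=e^h(\partial_xF)(e^hx,p)$, that is, $\partial_x\Ux(h)=q\,\Ux(h)\partial_x$. Combining the two facts gives $\partial_x\Dp=\Jp\partial_x\Ux(h)=q\,\Jp\Ux(h)\partial_x=q\,\Dp\partial_x$. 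Equivalently, one can differentiate the pointwise formula $(\Dp H)(x,p)=\frac{H(qx,q^2p)-H(qx,p)}{(q^2-1)p}$ in $x$ directly.

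I do not expect a genuine obstacle. The only point needing care is the factor $q$ produced by the preliminary $x$-dilation in $\Dp$. This factor combines with the prefactor $q^{-1/2}$ in $A_H$ to give $q^{1/2}$, and it is what makes the two terms appear with the same coefficient $q^{1/2}$. By contrast, $\partial_x$ does not commute simply with $\Jx$ (the factor $1/x$ intervenes), but that composition never occurs in the divergence, so it causes no difficulty.
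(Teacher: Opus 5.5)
Your proposal is correct and follows essentially the same route as the paper: the paper also derives $\partial_x\Dp=q\,\Dp\partial_x$ from the facts that $\Jp$ commutes with $\partial_x$ and $\partial_x\Ux(h)=q\,\Ux(h)\partial_x$, uses $\partial_p\Jx=\Jx\partial_p$, and substitutes both into the expression for $\diver\Vfield$. Your added observation that $C^1$ regularity suffices is a harmless sharpening of the paper's ``sufficiently smooth''.
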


\begin{proof}
The operator $\Jp$ commutes with $\partial_x$, while $\partial_x\Ux(h)=q\Ux(h)\partial_x$.
Hence $\partial_x\Dp=q\Dp\partial_x$.
Also, $\partial_p$ commutes with $\Jx$.  Substitution into the preceding expression for $\diver\Vfield$ proves the
formula.
\end{proof}

\begin{corollary}
\label{cor:separable}
If $H(x, p)=U(x)+K(p)$,
then
$\diver\Vfield=0$.
Consequently, wherever the Jackson flow exists, it preserves ordinary
area.
\end{corollary}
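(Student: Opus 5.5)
The plan is to substitute the separable form directly into the divergence formula \eqref{eq:divergence-exact} of Proposition~\ref{prop:divergence} and show that each of the two bracketed terms vanishes separately. If $H(x,p)=U(x)+K(p)$, then $\partial_xH=U'(x)$ depends only on $x$ and $\partial_pH=K'(p)$ depends only on $p$. The required smoothness is the same as in Proposition~\ref{prop:divergence}, so it suffices that $U$ and $K$ are smooth on $(0,\infty)$.

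First I will compute $\Dp(\partial_xH)$. Since $\Dp=\Jp\Ux(h)$, applying $\Ux(h)$ to $U'(x)$ gives $U'(qx)$, which is still independent of $p$. The forward Jackson operator $\Jp$ annihilates every function independent of $p$, because the numerator $G(x,q^2p)-G(x,p)$ vanishes identically. Hence $\Dp(\partial_xH)=0$.

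Second, by the same reasoning, $\Jx(\partial_pH)=\Jx K'(p)=0$, since $K'(p)$ is invariant under $\Ux(2h)$.

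Substituting both results into \eqref{eq:divergence-exact} gives $\diver\Vfield=q^{1/2}[0-0]=0$.

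For the area-preservation statement, I will invoke the Liouville theorem for smooth autonomous vector fields. Let $\Phi_t$ be the local flow of $\Vfield$, defined wherever it exists. Then $\frac{d}{dt}\det D\Phi_t=(\diver\Vfield)\circ\Phi_t\,\det D\Phi_t=0$ with $\det D\Phi_0=1$. Hence $\det D\Phi_t\equiv1$, and the change-of-variables formula shows that Lebesgue measure of any measurable set is preserved on the domain where the flow is defined.

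For real-valued $H$, the field $\Vfield$ is smooth on $\mathbb R_+^2$. This follows because $\mathcal M_{1/x}$, $\mathcal M_{1/p}$ and the dilations preserve $\mathcal X$, so the classical Liouville argument applies.

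There is no real obstacle here. The only point needing care is that the divergence formula already absorbed the commutation $\partial_x\Dp=q\Dp\partial_x$. Even without that formula, the direct check is immediate: $\Dp H=\Jp K(p)$ evaluated after an $x$-dilation does not depend on $x$, and $\Jx H=\Jx U$ does not depend on $p$. Hence both partial derivatives in the definition of $\diver\Vfield$ vanish, and I would mention this as a consistency check.
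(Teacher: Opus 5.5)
Your proposal is correct and follows the paper's argument. The paper's proof is exactly the observation that both mixed terms $\Dp(\partial_xH)$ and $\Jx(\partial_pH)$ in \eqref{eq:divergence-exact} vanish for separable $H$. Your Liouville-formula justification of area preservation spells out a step the paper leaves implicit.
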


\begin{proof}
Both mixed terms in \eqref{eq:divergence-exact} vanish.
\end{proof}

The separable case includes the Hamiltonians commonly used in Hamiltonian
Monte Carlo.  The statement here concerns the continuous Jackson flow; any
volume-preservation property of a discrete time-stepping map is a separate
question.

For monomial Hamiltonians,
\begin{align*}
H_{m,n}(x,p)
&=
x^mp^n,
\qquad
m,n\in\N_0,
\end{align*}
the Jackson derivatives are
\begin{align*}
\Jx H_{m,n}
&=
\begin{cases}
[m]_{q^2}x^{m-1}p^n,
& m\ge1,\\
0,
& m=0,
\end{cases}\\
\Dp H_{m,n}
&=
\begin{cases}
q^m[n]_{q^2}x^mp^{n-1},
& n\ge1,\\
0,
& n=0.
\end{cases}
\end{align*}
Thus no negative powers are introduced when one exponent is zero. The
coordinate equations are
\begin{align}
\dot x
&=
\begin{cases}
q^{m-1/2}[n]_{q^2}x^mp^{n-1},
& n\ge1,\\
0,
& n=0,
\end{cases}
\notag\\
\dot p
&=
\begin{cases}
-q^{1/2}[m]_{q^2}x^{m-1}p^n,
& m\ge1,\\
0,
& m=0.
\end{cases}
\label{eq:monomial-flow}
\end{align}

\begin{proposition}
\label{prop:monomial-divergence}
If $m, n\ge1$, then
\begin{align*}
\diver V_h[H_{m,n}]=q^{1/2}\left(m q^{m-1}[n]_{q^2}-n[m]_{q^2}\right)x^{m-1}p^{n-1}.
\end{align*}
If $m=0$ or $n=0$, the divergence is identically zero. Consequently, every
one-variable monomial and the bilinear Hamiltonian $H_{1,1}=xp$ generate
divergence-free Jackson flows.
\end{proposition}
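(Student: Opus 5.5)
The plan is to apply Proposition~\ref{prop:divergence} with $H=H_{m,n}$ and evaluate the two Jackson terms on monomials.

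For $m,n\ge1$, we have $\partial_xH_{m,n}=m x^{m-1}p^n$ and $\partial_pH_{m,n}=n x^m p^{n-1}$. The monomial formulas recorded just before the proposition then give, with the exponents shifted appropriately,
\begin{align*}
\Dp\bigl(m x^{m-1}p^n\bigr)=m q^{m-1}[n]_{q^2}x^{m-1}p^{n-1},\qquad
\Jx\bigl(n x^mp^{n-1}\bigr)=n[m]_{q^2}x^{m-1}p^{n-1}.
\end{align*}
We then substitute these into \eqref{eq:divergence-exact} and factor out $q^{1/2}x^{m-1}p^{n-1}$, which yields the stated formula. As a cross-check, one can differentiate the components in \eqref{eq:monomial-flow} directly. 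The derivative $\partial_x\dot x$ gives $q^{m-1/2}m[n]_{q^2}x^{m-1}p^{n-1}$, and $\partial_p\dot p$ gives $-q^{1/2}[m]_{q^2}n x^{m-1}p^{n-1}$. Their sum agrees with the formula.

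For the degenerate cases, suppose $m=0$. Then $\Jx H=0$, and $A_H=q^{-1/2}[n]_{q^2}p^{n-1}$ (or $0$) does not depend on $x$, so both terms in the divergence vanish. Symmetrically, if $n=0$, then $\Dp H=0$ and $B_H$ depends only on $x$, so $\partial_pB_H=0$. A one-variable monomial falls under one of these two cases.

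For $H_{1,1}=xp$, we compute $1\cdot q^0[1]_{q^2}-1\cdot[1]_{q^2}=0$, since $[1]_{q^2}=1$. This gives the final claim.

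There is no real obstacle here. The only point needing care is the $q^{m-1}$ factor: $\Dp$ acting on $x^{m-1}p^n$ carries $q^{m-1}$, not $q^m$.
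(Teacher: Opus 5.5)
Your proof is correct and is essentially the paper's. The paper differentiates the components in \eqref{eq:monomial-flow} directly, which is your cross-check, and it handles $m=0$ or $n=0$ exactly as you do, by noting that each component is independent of the relevant variable. Your main route through \eqref{eq:divergence-exact} is only a repackaging of the same computation: there the factor $q^{1/2}q^{m-1}$ comes from the commutation $\partial_x\Dp=q\Dp\partial_x$, rather than from differentiating $q^{m-1/2}x^m$.
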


\begin{proof}
For $m,n\ge1$, differentiate the two coordinate equations in
\eqref{eq:monomial-flow}. If $m=0$ or $n=0$, one component of the field is
independent of the variable with respect to which it is differentiated, and
the divergence vanishes directly.
\end{proof}

The variation of the classical symbol along the Jackson flow is
\begin{align*}
\frac{d}{dt}H(x(t),p(t))=
\nabla H\cdot\Vfield=
q^{-1/2}
(\partial_xH)(\Dp H)-
q^{1/2}
(\partial_pH)(\Jx H).
\end{align*}
It need not vanish for fixed $q\ne1$.

Consider the normalized oscillator
\begin{align*}
H_{\mathrm{osc}}(x, p)=\frac{x^2+p^2}{[2]_{q^2}}.
\end{align*}
Its Jackson coordinate equations are
\begin{align*}
\dot x=
q^{-1/2}p,\qquad
\dot p=-q^{1/2}x.
\end{align*}
The vector field is divergence free because the Hamiltonian is separable.
However, its pointwise Jackson action is not an ordinary derivation.  For
example,
\begin{align*}
\cE_{h,\mathrm{cov}}^{H_{\mathrm{osc}}}(x,p)
&=
-q^{1/2}(q-1)x^2.
\end{align*}
Thus ordinary area preservation and the ordinary Leibniz property are
logically independent.  These finite-$q$ defects lead naturally to the
question of whether the classical Hamiltonian structure is recovered as the
deformation is removed.

\section{Classical limit of the representation and dynamics}
\label{sec:limit}

We therefore examine the limit
$$
q=e^h\longrightarrow1
$$
simultaneously at the operator, symbol, and dynamical levels.  The preceding
construction separates three statements: the operator realization approaches
multiplication by its normal symbol, the normal star product approaches
pointwise multiplication, and the Jackson coordinate field approaches the
classical Hamiltonian vector field.  The estimates below place these limits on
a common local scale.

Let $K\Subset\R_+^2$.  All estimates below are understood on a slightly
larger neighborhood containing the required dilation orbits.

\begin{lemma}
\label{lem:Jackson-expansions}
For a sufficiently smooth function $F$,
\begin{align*}
\Jx F&=\partial_xF+h x\partial_{xx}F+O(h^2),\\\
\Dp F&=\partial_pF+h\left(x\partial_{xp}F+p\partial_{pp}F\right)+O(h^2)
\end{align*}
locally uniformly as $h\to0$.
\end{lemma}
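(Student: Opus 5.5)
The plan is to write each Jackson operator as a difference quotient along a dilation orbit and Taylor-expand in the dilation parameter, exactly as in the proof of Theorem~\ref{thm:operator-polynomial-expansion}. For $\Jx$, fix $(x,p)$ in a compact $K$ and set $\phi(t)=F(e^tx,p)$, so that $(\Jx F)(x,p)=\bigl(\phi(2h)-\phi(0)\bigr)/\bigl((e^{2h}-1)x\bigr)$. Taylor's formula to third order gives
\begin{align*}
\phi(2h)-\phi(0)=2h(\Ax F)+2h^2(\Ax^2F)+O(h^3),
\end{align*}
with the remainder bounded by $\frac{(2|h|)^3}{6}\|\Ax^3F\|_{C^0(K')}$. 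Here $K'$ is the compact set swept by the dilation segments for $|h|\le h_0$, as in Theorem~\ref{thm:operator-polynomial-expansion}.

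Next I expand the prefactor:
\begin{align*}
\frac{2h}{e^{2h}-1}=1-h+O(h^2).
\end{align*}
Multiplying the two expansions and dividing by $2hx$ gives $x^{-1}\bigl[\Ax F+h(\Ax^2F-\Ax F)\bigr]+O(h^2)$. Since $\Ax F=x\partial_xF$ and $\Ax^2F-\Ax F=x^2\partial_{xx}F$, this equals $\partial_xF+hx\partial_{xx}F+O(h^2)$. The constant in the $O(h^2)$ term depends only on $\sup_{K'}$ of derivatives of $F$ up to order three and on $\inf_K x>0$, so the expansion holds uniformly on $K$.

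For $\Dp=\Jp\Ux(h)$, the same argument in the $p$ variable gives $\Jp G=\partial_pG+hp\partial_{pp}G+O(h^2)$, applied with $G=\Ux(h)F$. This step needs the constants to be uniform in $h$. They are, because the seminorms of $\Ux(h)F$ on $K'$ are bounded by those of $F$ on a slightly larger compact set, by the chain rule and Proposition~\ref{prop:dilation-groups}.

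It remains to expand $\Ux(h)F=F+h\Ax F+O(h^2)$ in the $C^2$ seminorms on a compact set. This holds because $t\mapsto\Ux(t)F$ is smooth into $\cX$. Substituting, I obtain
\begin{align*}
\partial_p\bigl(F+hx\partial_xF\bigr)+hp\partial_{pp}F+O(h^2)=\partial_pF+h\bigl(x\partial_{xp}F+p\partial_{pp}F\bigr)+O(h^2).
\end{align*}

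The main point needing care is this composition step: the remainders must be controlled in derivative seminorms, not just in $C^0$, so that $\partial_p$ and $\Jp$ can be applied to the remainder of $\Ux(h)F$. I will handle this with the integral form of Taylor's remainder in $t$, differentiated under the integral sign. Under that form the $O(h^2)$ term of $\Ux(h)F$ is controlled in $\|\cdot\|_{K',1}$ by $\|F\|_{K'',3}$, and $\Jp$ of a function is bounded in $C^0(K)$ by its $\partial_p$-sup over the orbit. Hence assuming $F\in C^3$ near the relevant orbits suffices.
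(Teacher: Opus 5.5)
Your proposal is correct and follows the same route as the paper: you Taylor-expand along the dilation orbits and divide by $(e^{2h}-1)x$, where the expansion $2h/(e^{2h}-1)=1-h+O(h^2)$ and the identity $\Ax^2F-\Ax F=x^2\partial_{xx}F$ yield the first formula, and you obtain the expansion of $\Dp=\Jp\Ux(h)$ by combining the $\Jp$ expansion with $\Ux(h)F=F+h\Ax F+O(h^2)$. Your control of the remainder of $\Ux(h)F$ in $C^1$ seminorms, so that $\Jp$ can be applied to it, fills in a uniformity detail the paper leaves implicit and shows that $F\in C^3$ near the dilation orbits suffices.
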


\begin{proof}
Expand $F(e^{2h}x,p)$ and $F(x,e^{2h}p)$ along the corresponding dilation
orbits.  The first formula follows after dividing by
$(e^{2h}-1)x$.  For the second, combine the expansion of $\Jp$ with
$$
\Ux(h)F=F+h\Ax F+O(h^2).
$$
\end{proof}

The classical Hamiltonian vector field and its action are
\begin{align*}
\Vzero&=(\partial_pH,-\partial_xH),\\
\mathcal V_0^HF&=(\partial_pH)\partial_xF-(\partial_xH)\partial_pF.
\end{align*}

\begin{theorem}[Classical limit of the realized dynamics]
\label{thm:three-level-limit}
Let $H$ and $F$ be sufficiently smooth on a neighborhood of
$K\Subset\R_+^2$. Then, as $h\to0$,
\begin{align*}
\Vfield
&=
\Vzero
+
O(h)
\qquad\text{in }C^0(K),\\
\Xjack F
&=
\mathcal V_0^HF
+
O(h)
\qquad\text{in }C^0(K),\\
\Bq F
&=
\mathcal V_0^HF
+
O(h)
\qquad\text{in }C^0(K).
\end{align*}
If $H,F\in\mathcal P$, then
\begin{align*}
\Xstar F
&=
\mathcal V_0^HF
+
O(h)
\qquad\text{in }C^0(K).
\end{align*}
If, in addition, $H\in\mathcal P$, $\mathcal H_h=\Op_h(H)$, and
$\psi\in\mathcal X$, then
\begin{align*}
\mathcal H_h\psi
&=
H\psi
+
O(h)
\qquad\text{in }C^0(K).
\end{align*}
\end{theorem}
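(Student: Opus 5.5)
The plan is to reduce every claim to Lemma~\ref{lem:Jackson-expansions} and to the earlier asymptotic results, Proposition~\ref{prop:star-expansion} and Theorem~\ref{thm:operator-polynomial-expansion}. All remainders will be controlled on one compact set $K'\Subset\R_+^2$ that contains every dilation orbit $\{(e^{t}x,e^{s}p):(x,p)\in K,\ |t|,|s|\le 3h_0\}$ with $|h|\le h_0$. This set is compact by the same argument used for $K_{F,h_0}$.

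\textbf{First step: the Jackson expansions to order zero.} I would first make Lemma~\ref{lem:Jackson-expansions} quantitative at order zero. Write
\[
\Jx F(x,p)=\frac{x}{(e^{2h}-1)x}\int_0^{2h}(\partial_xF)(e^{t}x,p)\,e^{t}\,dt .
\]
This is an average of $\partial_xF$ along the orbit with weight $e^t$. It shows that $\Jx F-\partial_xF$ is bounded by $C|h|\,\|F\|_{K',2}$ uniformly on $K$. The same argument applied to $\Jp$, combined with $\Ux(h)G=G+O(h)$ (bounded by $|h|\|\Ax G\|_{C^0(K')}$), gives
\[
\Dp F-\partial_pF=O(h)
\]
in $C^0(K)$, with a constant depending on $\|F\|_{K',2}$. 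Only $C^2$ regularity near $K'$ is needed; this is how I would read ``sufficiently smooth.''

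\textbf{The three pointwise statements.} Then:
\begin{itemize}
\item $\Vfield-\Vzero$ has components
\[
q^{-1/2}\Dp H-\partial_pH
\quad\text{and}\quad
-q^{1/2}\Jx H+\partial_xH .
\]
Since $q^{\pm1/2}=1+O(h)$ and the first step applies, both components are $O(h)$.
\item For $\Xjack F$, I would use the third identity of Theorem~\ref{thm:comparison-actions}:
\[
\Xjack F=\mathcal V_h^HF+A_H(\Jx-\partial_x)F+B_H(\Dp-\partial_p)F .
\]
Here $A_H,B_H$ are bounded on $K$ uniformly in $h$, and the differences are $O(h)$. Also $\mathcal V_h^HF-\mathcal V_0^HF=(\Vfield-\Vzero)\cdot\nabla F=O(h)$.
\item The same argument applies to $\Bq F$, using the fourth identity and the $\Jp$ estimate.
\end{itemize}

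\textbf{Polynomial case.} For $H,F\in\mathcal P$, the first identity of Theorem~\ref{thm:comparison-actions} gives
\[
\Xstar F=\Xjack F+O(h).
\]
To make the $O(h)$ uniform, I would use the structure of the Jackson derivatives. The factors $\Dp H$, $\Jx F$, $\Jx H$, $\Dp F$ lie in a fixed finite-dimensional space of polynomials, namely those of degree at most $\deg H$ or $\deg F$. Their coefficients are $[m]_{q^2}$ and $q^m[n]_{q^2}$, which are bounded for $|h|\le h_0$.

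For such bounded-coefficient polynomials $P,Q$, the exact series~\eqref{eq:star-full-series} gives
\[
P\star_hQ-PQ=\sum_{k\ge1}\frac{h^k}{k!}(\Ap^kP)(\Ax^kQ).
\]
In the monomial basis this is bounded by $(e^{|h|NM}-1)\,\|P\|\,\|Q\|$, where $N,M$ are the degree bounds. This is $O(h)$ uniformly on $K$. Combined with the previous step, the polynomial statement follows.

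\textbf{Operator statement.} The last claim follows directly from Theorem~\ref{thm:operator-polynomial-expansion}:
\[
\mathcal H_h\psi-H\psi=h(\Ap H)(\Ax\psi)+\mathcal R_h^H\psi .
\]
The first term is $O(h)$ on $K$, and the remainder is $O(h^2)$ by the stated bound with constant $C_{H,K}\|\Ax^2\psi\|_{C^0(K_{H,h_0})}$.

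\textbf{Main obstacle.} I expect the main difficulty to be uniformity in the polynomial star-product step. Proposition~\ref{prop:star-expansion} is stated for fixed polynomials, but here the polynomials themselves depend on $h$. The finite-dimensionality argument with the explicit bound $e^{|h|NM}-1$ is what I would use to close that gap.
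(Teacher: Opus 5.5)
Your proposal is correct and follows essentially the same route as the paper. The paper derives the field and pointwise-action limits from Lemma~\ref{lem:Jackson-expansions} together with $q^{\pm1/2}=1+O(h)$, the star-action limit from Proposition~\ref{prop:star-expansion}, and the operator limit from Theorem~\ref{thm:operator-polynomial-expansion}; your explicit $e^{|h|NM}-1$ bound for the $h$-dependent Jackson-derivative polynomials is a quantitative form of the finite-dimensionality argument the paper invokes in the proof of Theorem~\ref{thm:comparison-actions}, and your use of that theorem's third and fourth identities for smooth (not only polynomial) $H,F$ is legitimate, since they are rearrangements of the definitions.
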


\begin{proof}
The field and pointwise action limits follow from
Lemma~\ref{lem:Jackson-expansions} and
$$
q^{\pm1/2}
=
1+O(h).
$$
For polynomial symbols, the star-action limit follows additionally from
Proposition~\ref{prop:star-expansion}. The operator-Hamiltonian limit is the
specialization of
Theorem~\ref{thm:operator-polynomial-expansion} to the polynomial symbol
$F=H$.
\end{proof}

A more explicit first-order expansion of the Jackson field is
\begin{align*}
V_h[H]
&=
V_0[H]+hW_H+O(h^2),\\\
W_H^{(x)}
&=
x\partial_{xp}H
+
p\partial_{pp}H
-
\frac12\partial_pH,\\
W_H^{(p)}
&=
-x\partial_{xx}H
-
\frac12\partial_xH.
\end{align*}

The vector-field estimate also controls the corresponding flows on finite time intervals.

\begin{theorem}[Finite-time convergence of Jackson trajectories]
\label{thm:trajectory-limit}
Let $K\Subset\R_+^2$, and suppose that the classical solution
$$
z_0(t)=(x_0(t),p_0(t))
$$
of
$$
\dot z_0=V_0[H](z_0)
$$
with initial value $z_{\mathrm{in}}$ remains in the interior of $K$ for
$0\le t\le T$.  Assume that $H\in C^3$ is real-valued on a neighborhood of $K$ and that,
for all sufficiently small $|h|$, the solution
$$
\dot z_h=V_h[H](z_h),
\qquad
z_h(0)=z_{\mathrm{in}},
$$
also remains in $K$ on $[0,T]$.  Then there exists $C_T>0$ such that
\begin{align*}
\sup_{0\le t\le T}
|z_h(t)-z_0(t)|
&\le
C_T|h|.
\end{align*}
\end{theorem}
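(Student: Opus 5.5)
The argument is a Gronwall estimate combining the uniform vector-field bound of Theorem~\ref{thm:three-level-limit} with a Lipschitz bound for the classical field $\Vzero$ on $K$. Since $H\in C^3$ on a neighborhood of $K$, the classical field $\Vzero=(\partial_pH,-\partial_xH)$ is $C^2$. Hence it is Lipschitz on $K$ with some constant $L$. The dilation orbits entering the Jackson operators remain in a slightly larger compact neighborhood $K'\Subset\R_+^2$ for $|h|\le h_0$, so we may take $L$ to be the Lipschitz constant of $\Vzero$ on $K'$. If $K$ is not convex, we take $L$ to be the supremum of $|D\Vzero|$ on the convex hull of $K$, assuming $H$ is $C^3$ there. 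If that hull leaves the domain, we instead use a local Lipschitz bound valid on a compact $\delta$-neighbourhood of $K$; this suffices for points at distance less than $\delta$, which is the only regime needed once the difference is shown to be small.

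The first step is the uniform field estimate
\begin{align*}
\sup_{z\in K}\bigl|\Vfield(z)-\Vzero(z)\bigr|\le C_1|h|,\qquad |h|\le h_0 .
\end{align*}
This is the first assertion of Theorem~\ref{thm:three-level-limit}. Quantitatively, it comes from Lemma~\ref{lem:Jackson-expansions} with only a first-order remainder. Writing $\Jx H-\partial_xH$ as an integral along the dilation orbit, the constant $C_1$ depends only on second derivatives of $H$ on $K'$; the same holds for $\Dp H-\partial_pH$, using $\Dp=\Jp\Ux(h)$. Together with $q^{\pm1/2}=1+O(h)$, this bounds $C_1$ by $\|H\|_{K',2}$. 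In particular $C^3$ regularity is more than enough.

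Next, set $e(t)=z_h(t)-z_0(t)$. Both trajectories lie in $K$ on $[0,T]$ by hypothesis, so for $t\in[0,T]$
\begin{align*}
|e(t)|\le\int_0^t\bigl|\Vfield(z_h)-\Vzero(z_h)\bigr|\,ds+\int_0^t\bigl|\Vzero(z_h)-\Vzero(z_0)\bigr|\,ds\le C_1|h|t+L\int_0^t|e(s)|\,ds .
\end{align*}
Gronwall's inequality then gives $|e(t)|\le C_1|h|\,t\,e^{Lt}$, or more sharply $|e(t)|\le (C_1/L)|h|(e^{Lt}-1)$. Taking $C_T=C_1Te^{LT}$ completes the proof.

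The main obstacle is the uniformity of the constants. Both $C_1$ and $L$ must be independent of $h$. This requires the enlarged set $K'$, containing all the $q$- and $q^2$-dilation orbits of points in $K$, to be fixed once $h_0$ is fixed, exactly as $K_{F,h_0}$ was fixed in Theorem~\ref{thm:operator-polynomial-expansion}.

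If the Lipschitz step causes trouble because $K$ is not convex, the fix is as follows. Since $z_0([0,T])$ lies in the interior of $K$, there is a $\delta$-tube around it inside $K$. A continuation argument then shows that $|e(t)|<\delta$ for all $t\le T$ once $|h|$ is small enough: as long as $|e|<\delta$, the Gronwall bound keeps $|e|\le C_T|h|<\delta$. Along that tube, the segment from $z_0(t)$ to $z_h(t)$ lies in $K$, so the mean value theorem applies. This also shows that the hypothesis that $z_h$ stays in $K$ is essentially automatic.

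Real-valuedness of $H$ is used only to make $\Vfield$ a real vector field on $\R_+^2$, so that $z_h$ is a genuine planar trajectory.
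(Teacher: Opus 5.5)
Your proposal is correct and is essentially the paper's argument: the uniform estimate $\|V_h[H]-V_0[H]\|_{C^0(K)}\le C|h|$ from Theorem~\ref{thm:three-level-limit}, a Lipschitz constant $L$ for $V_0[H]$ on $K$, and Gr\"onwall's inequality give the same bound $C|h|(e^{Lt}-1)/L$. Your extra convexity discussion is unnecessary, since a $C^1$ field on a neighbourhood of a compact set is automatically Lipschitz on that set; your bootstrap remark that $z_h([0,T])\subset K$ holds automatically for small $|h|$ is correct and is a small bonus.
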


\begin{proof}
On $K$, Theorem~\ref{thm:three-level-limit} gives
$$
\|V_h[H]-V_0[H]\|_{C^0(K)}
\le C|h|.
$$
The classical vector field is Lipschitz on $K$, with some constant $L$.
Therefore,
\begin{align*}
|z_h(t)-z_0(t)|
&\le
C|h|t
+
L\int_0^t
|z_h(s)-z_0(s)|\,ds.
\end{align*}
Gr\"onwall's inequality yields
$$
|z_h(t)-z_0(t)|
\le
C|h|
\frac{e^{Lt}-1}{L},
$$
with the usual interpretation when $L=0$.
\end{proof}

\begin{corollary}
\label{cor:defect-orders}
Let $H,F,G\in C^3$ on a neighborhood of $K\Subset\mathbb R_+^2$.
Then, as $h\to0$,
\begin{align*}
&\|\diver\Vfield\|_{C^0(K)}=O(|h|),
&\|\nabla H\cdot\Vfield\|_{C^0(K)}=O(|h|),\\
&\|\cE_{h,\mathrm{cov}}^H(F,G)\|_{C^0(K)}=O(|h|),
&\|\cE_{h,\mathrm{asym}}^H(F,G)\|_{C^0(K)}=O(|h|).
\end{align*}
\end{corollary}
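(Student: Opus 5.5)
The plan is to treat each of the four defect quantities as an explicit expression whose classical counterpart vanishes identically. Lemma~\ref{lem:Jackson-expansions} then supplies the $O(h)$ rate, together with the elementary dilation estimate
\begin{align*}
\Ux(t)F-F=t\,\Ax F+O(t^2),
\end{align*}
which holds locally uniformly on a compact neighborhood $K'$ of $K$ containing all dilation orbits $\{(e^{s}x,e^{s'}p):|s|,|s'|\le 2h_0\}$. Throughout, $C^3$ regularity of $H,F,G$ is exactly what makes the remainders uniform: the Taylor remainders in the lemma involve second derivatives of $F$, and for the divergence we differentiate $H$ once more.

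\emph{Divergence.} Start from the exact formula \eqref{eq:divergence-exact}. Apply Lemma~\ref{lem:Jackson-expansions} to the $C^2$ functions $\partial_xH$ and $\partial_pH$, which gives
\begin{align*}
\Dp(\partial_xH)=\partial_{px}H+O(h),\qquad \Jx(\partial_pH)=\partial_{xp}H+O(h)
\end{align*}
in $C^0(K)$. The leading terms cancel by equality of mixed partials, and the factor $q^{1/2}$ is bounded. This is the only place where $C^3$ on $H$ is genuinely needed, and it is the step I expect to require the most care. The remainder of $\Jx G$ is controlled by $\sup_{K'}|x\,\partial_{xx}G|$ through the integral form of Taylor's formula along the orbit $t\mapsto G(e^tx,p)$. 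One then checks that this bound only requires $G\in C^2$, so that it applies with $G=\partial_xH$ or $G=\partial_pH$.

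\emph{Energy variation.} By Theorem~\ref{thm:three-level-limit} we have $\Vfield=\Vzero+O(h)$ in $C^0(K)$, and $\nabla H$ is bounded on $K$. Since
\begin{align*}
\nabla H\cdot\Vzero=(\partial_xH)(\partial_pH)-(\partial_pH)(\partial_xH)=0,
\end{align*}
it follows that $\|\nabla H\cdot\Vfield\|_{C^0(K)}\le\|\nabla H\|_{C^0(K)}\,\|\Vfield-\Vzero\|_{C^0(K)}=O(|h|)$.

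\emph{Derivation defects.} Use the exact formulas of Theorem~\ref{thm:Leibniz-defects}. Each term is a product of three pieces: a coefficient $A_H$ or $B_H$, which is $O(1)$ on $K$ by Theorem~\ref{thm:three-level-limit}; a Jackson derivative $\Jx G$, $\Jp G$, $\Dp G$ or $\Dp F$, which is $O(1)$ by Lemma~\ref{lem:Jackson-expansions}; and a shift difference $\mathcal S_xF-F$, $\mathcal S_pF-F$, $\mathcal T_xG-G$ or $\mathcal T_x\mathcal S_pF-F$. Each shift difference is $O(h)$ in $C^0(K)$ by the mean value theorem along the dilation orbits inside $K'$. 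For the composite shift, write
\begin{align*}
\mathcal T_x\mathcal S_pF-F=\mathcal T_x(\mathcal S_pF-F)+(\mathcal T_xF-F).
\end{align*}
Multiplying the bounds of the three pieces gives $O(|h|)$ for both $\cE_{h,\mathrm{cov}}^H$ and $\cE_{h,\mathrm{asym}}^H$.
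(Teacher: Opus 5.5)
Your proposal is correct and follows the paper's proof essentially step for step. The divergence uses the exact formula \eqref{eq:divergence-exact}, Lemma~\ref{lem:Jackson-expansions} applied to $\partial_xH$ and $\partial_pH$, and equality of mixed partials. The energy uses $\Vfield=\Vzero+O(h)$ together with $\nabla H\cdot\Vzero=0$. The Leibniz defects use the exact formulas of Theorem~\ref{thm:Leibniz-defects} with $O(h)$ dilation differences. Your explicit bounds on the cofactors $A_H$, $B_H$ and the Jackson derivatives, and the splitting $\mathcal T_x\mathcal S_pF-F=\mathcal T_x(\mathcal S_pF-F)+(\mathcal T_xF-F)$, fill in details the paper leaves implicit.
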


\begin{proof}
For the divergence, Proposition~\ref{prop:divergence} and
Lemma~\ref{lem:Jackson-expansions}, applied to $\partial_xH$ and
$\partial_pH$, give
\begin{align*}
\Dp(\partial_xH)=\partial_p\partial_xH+O(h),\qquad
\Jx(\partial_pH)=\partial_x\partial_pH+O(h)
\end{align*}
locally on $K$.  Since the mixed derivatives agree and $q^{1/2}=1+O(h)$, formula
\eqref{eq:divergence-exact} gives the stated divergence estimate.  For the
energy, Theorem~\ref{thm:three-level-limit} gives $V_h[H]=V_0[H]+O(h)$, while
$$
\nabla H\cdot V_0[H]=0,
$$
which proves the stated energy estimate.  Finally, both Leibniz estimates follow
from Theorem~\ref{thm:Leibniz-defects}, because every dilation difference
appearing there is $O(h)$ locally on compact sets.
\end{proof}

\begin{remark}[Covariance versus approximation order]
A centered dilation difference, for example, satisfies
\begin{align*}
\Jx^{\mathrm{sym}}=
\frac{1}{e^h-e^{-h}}\Mx^{-1}\left(\Ux(h)-\Ux(-h)
\right),\qquad
\Jx^{\mathrm{sym}}F=\partial_xF+O(h^2).
\end{align*}
Its higher approximation order does not make it interchangeable with the
forward Jackson operator in the present construction.  The forward choice is
dictated by the exact covariant differential relations of
Theorem~\ref{thm:full-differential-representation}; changing the difference
operator changes the represented calculus itself.
\end{remark}

\section{Conclusion}
\label{sec:conclusion}

We have established a representation--symbol correspondence for covariant $q$-Hamiltonian mechanics on the quantum plane. Multiplication and dilation operators give a faithful representation of the coordinate algebra, the same dilation structure realizes the covariant derivatives through Jackson operators, and normal ordering transports operator composition to an associative star product on polynomial symbols. Within this framework the ordered Hamiltonian action has an exact symbol image: it becomes a star-Jackson action, and its restriction to the coordinate observables yields the Jackson coordinate equations without a small-$h$ approximation. Thus the algebraic passage from the covariant quantum-plane calculus to the coordinate dynamics is exact at the symbol level.

Pointwise and classical dynamics arise only after this exact correspondence has been established. Replacing star multiplication by ordinary multiplication changes the action on nonlinear observables, while replacing the operator Hamiltonian by multiplication with its normal symbol is likewise asymptotic. The associated pointwise vector field has ordinary geometric properties---including divergence and energy variation---that need not coincide with formal covariance properties of the noncommutative calculus. These defects vanish to first order as $q\to1$, and both the Jackson vector field and its finite-time trajectories converge to classical Hamiltonian dynamics. The framework therefore makes explicit four connected levels: the quantum-plane algebra, its operator representation, the normal star-symbol dynamics, and the pointwise Jackson dynamics.

Natural extensions include analytic symbol classes, higher-dimensional and multiparameter quantum spaces, alternative covariant calculi and ordering prescriptions, and the relation between the induced pointwise dynamics and genuinely braided or noncommutative Hamiltonian structures.

\section*{Declaration of competing interest}
The authors declare that they have no known competing financial interests or personal relationships that could have appeared to influence the work reported in this paper.

\section*{Author contributions}
Both authors contributed equally to this work and approved the final manuscript.

\bibliographystyle{elsarticle-num-names}
\bibliography{myref}

@article{BauerWachter2003,
  author  = {Bauer, Claudia and Wachter, Hartmut},
  title   = {Operator representations on quantum spaces},
  journal = {Eur. Phys. J. C},
  volume  = {31},
  number  = {2},
  pages   = {261--275},
  year    = {2003},
  doi     = {10.1140/epjc/s2003-01324-0}
}

@article{Blohmann2003,
  author  = {Blohmann, Christian},
  title   = {Covariant realization of quantum spaces as star products by {Drinfeld} twists},
  journal = {J. Math. Phys.},
  volume  = {44},
  number  = {10},
  pages   = {4736--4755},
  year    = {2003},
  doi     = {10.1063/1.1602553}
}

@article{CabanEtAl1994,
  author  = {Caban, P. and Dobrosielski, A. and Krajewska, A. and Walczak, Z.},
  title   = {On {$q$}-deformed {Hamiltonian} mechanics},
  journal = {Phys. Lett. B},
  volume  = {327},
  number  = {3--4},
  pages   = {287--292},
  year    = {1994},
  doi     = {10.1016/0370-2693(94)90730-7}
}

@article{GerstenhaberGiaquinto2014,
  author  = {Gerstenhaber, Murray and Giaquinto, Anthony},
  title   = {On the cohomology of the {Weyl} algebra, the quantum plane, and the {$q$}-{Weyl} algebra},
  journal = {J. Pure Appl. Algebra},
  volume  = {218},
  number  = {5},
  pages   = {879--887},
  year    = {2014},
  doi     = {10.1016/j.jpaa.2013.10.006}
}

@article{GiaquintoZhang1998,
  author  = {Giaquinto, Anthony and Zhang, James J.},
  title   = {Bialgebra actions, twists, and universal deformation formulas},
  journal = {J. Pure Appl. Algebra},
  volume  = {128},
  number  = {2},
  pages   = {133--151},
  year    = {1998},
  doi     = {10.1016/S0022-4049(97)00041-8}
}

@article{HartwigLarssonSilvestrov2006,
  author  = {Hartwig, Jonas T. and Larsson, Daniel and Silvestrov, Sergei D.},
  title   = {Deformations of {Lie} algebras using {$\sigma$}-derivations},
  journal = {J. Algebra},
  volume  = {295},
  number  = {2},
  pages   = {314--361},
  year    = {2006},
  doi     = {10.1016/j.jalgebra.2005.07.036}
}

@article{IpClassical2013,
  author  = {Ip, Ivan Chi-Ho},
  title   = {The classical limit of representation theory of the quantum plane},
  journal = {Int. J. Math.},
  volume  = {24},
  number  = {4},
  pages   = {1350031},
  year    = {2013},
  doi     = {10.1142/S0129167X13500316}
}

@article{IyerMcCune2003,
  author  = {Iyer, Uma N. and McCune, Timothy C.},
  title   = {Quantum differential operators on the quantum plane},
  journal = {J. Algebra},
  volume  = {260},
  number  = {2},
  pages   = {577--591},
  year    = {2003},
  doi     = {10.1016/S0021-8693(03)00052-8}
}

@book{KacCheung2002,
  author    = {Kac, Victor and Cheung, Pokman},
  title     = {Quantum Calculus},
  series    = {Universitext},
  publisher = {Springer},
  address   = {New York},
  year      = {2002},
  doi       = {10.1007/978-1-4613-0071-7}
}

@book{KlimykSchmudgen1997,
  author    = {Klimyk, Anatoli and Schm{\"u}dgen, Konrad},
  title     = {Quantum Groups and Their Representations},
  series    = {Theoretical and Mathematical Physics},
  publisher = {Springer},
  address   = {Berlin},
  year      = {1997},
  doi       = {10.1007/978-3-642-60896-4}
}

@article{Lavagno2006,
  author  = {Lavagno, A. and Scarfone, A. M. and Swamy, P. Narayana},
  title   = {Classical and quantum {$q$}-deformed physical systems},
  journal = {Eur. Phys. J. C},
  volume  = {47},
  number  = {1},
  pages   = {253--261},
  year    = {2006},
  doi     = {10.1140/epjc/s2006-02557-y}
}

@article{LukinSternYakushin1993,
  author  = {Lukin, M. and Stern, A. and Yakushin, I.},
  title   = {Lagrangian and {Hamiltonian} formalism on a quantum plane},
  journal = {J. Phys. A Math. Gen.},
  volume  = {26},
  number  = {19},
  pages   = {5115--5132},
  year    = {1993},
  doi     = {10.1088/0305-4470/26/19/039}
}

@article{LuntsRosenberg1997,
  author  = {Lunts, Valery A. and Rosenberg, Alexander L.},
  title   = {Differential operators on noncommutative rings},
  journal = {Selecta Math. (N.S.)},
  volume  = {3},
  number  = {3},
  pages   = {335--359},
  year    = {1997},
  doi     = {10.1007/s000290050014}
}

@book{Majid1995,
  author    = {Majid, Shahn},
  title     = {Foundations of Quantum Group Theory},
  publisher = {Cambridge University Press},
  address   = {Cambridge},
  year      = {1995},
  doi       = {10.1017/CBO9780511613104}
}

@book{Manin2018,
  author    = {Manin, Yuri I.},
  title     = {Quantum Groups and Noncommutative Geometry},
  edition   = {2nd},
  series    = {CRM Short Courses},
  publisher = {Springer},
  address   = {Cham},
  year      = {2018},
  doi       = {10.1007/978-3-319-97987-8}
}

@misc{MelongWulkenhaar2026,
  author = {Melong, Fridolin and Wulkenhaar, Raimar},
  title  = {{$q$}-deformed topological recursion: quantum curves and non-perturbative analysis},
  year   = {2026},
  note   = {arXiv:2608.02179 [math-ph] (preprint)},
  doi    = {10.48550/arXiv.2608.02179}
}

@article{OstrovskyiSchmudgen2014,
  author  = {Ostrovskyi, Vasyl and Schm{\"u}dgen, Konrad},
  title   = {A resolvent approach to the real quantum plane},
  journal = {Integral Equ. Oper. Theory},
  volume  = {79},
  number  = {4},
  pages   = {451--476},
  year    = {2014},
  doi     = {10.1007/s00020-014-2165-6}
}

@article{Schmudgen2002,
  author  = {Schm{\"u}dgen, Konrad},
  title   = {On the quantum quarter plane and the real quantum plane},
  journal = {Int. J. Math.},
  volume  = {13},
  number  = {3},
  pages   = {279--321},
  year    = {2002},
  doi     = {10.1142/S0129167X02001307}
}

@article{Ubriaco1992,
  author  = {Ubriaco, Marcelo R.},
  title   = {Non-commutative differential calculus and {$q$}-analysis},
  journal = {J. Phys. A Math. Gen.},
  volume  = {25},
  number  = {1},
  pages   = {169--173},
  year    = {1992},
  doi     = {10.1088/0305-4470/25/1/021}
}

@article{WessZumino1991,
  author  = {Wess, Julius and Zumino, Bruno},
  title   = {Covariant differential calculus on the quantum hyperplane},
  journal = {Nucl. Phys. B Proc. Suppl.},
  volume  = {18},
  number  = {2},
  pages   = {302--312},
  year    = {1991},
  doi     = {10.1016/0920-5632(91)90143-3}
}

@misc{YangDeng2025,
  author = {Yang, Xiaomei and Deng, Zhiliang},
  title  = {{Hamiltonian Monte Carlo} from {$q$}-deformed phase--space mechanics},
  year   = {2025},
  note   = {arXiv:2512.13246 [math.NA] (preprint)},
  doi    = {10.48550/arXiv.2512.13246}
}

\end{document}